\newtheorem{fact}{Fact}
\newcommand{\lkcomment}[1]{\textcolor{blue}{LK: #1}}
\newcommand{\gst}[1]{\textcolor{red}{#1}}
\newcommand{\initiator}{i}
\newcommand{\responder}{r}
\newcommand{\distance}[2]{d_{{#1}{#2}}}
\newcommand{\agentsymbol}[1]{
{A_{#1}}
}
\newcommand{\agentlabel}[1]{
{x_{#1}}
}
\newcommand{\leader}{L}
\newcommand{\agentcolor}[1]{
    {c_{#1}}
}
\newcommand{\greystate}{0}
\newcommand{\greenishstate}{1}
\newcommand{\greenstate}{2}
\NewDocumentCommand{\agentvlabel}{m o}{
\IfNoValueTF{#2}
    {\mathbf{x_{#1}}}
    {\mathbf{x_{#1}[#2]}}
}
\newcommand{\agentvector}[2]{\overrightarrow{v_{{#1}{#2}}}}
\NewDocumentCommand{\agentvvector}{m m o}{
\IfNoValueTF{#3}
    {\overrightarrow{\mathbf{v_{{#1}{#2}}}}}
    {\overrightarrow{\mathbf{v_{{#1}{#2}}}}[#3]}
}
\newcommand{\agentposition}[1]{
{p_{#1}}
}
\newcommand{\algoassign}{\leftarrow}
\newcommand{\commontcc}{Code executed by initiator $\agentsymbol{i}$ interacting with ${\agentsymbol{r}}$}
\title{Anonymous Self-Stabilising Localisation via Spatial Population Protocols}
\titlerunning{Anonymous Self-Stabilising Localisation via Spatial Population Protocols}
\author{Leszek Gąsieniec}{Department of Computer Science, University of Liverpool, UK. \and \url{https://cgi.csc.liv.ac.uk/~leszek/}}{L.A.Gasieniec@liverpool.ac.uk}{https://orcid.org/0000-0003-1809-9814}{}
\author{Łukasz Kuszner}{Institute of Informatics, University of Gda{\'n}sk, Poland \and \url{https://inf.ug.edu.pl/~lkuszner/}}
{lukasz.kuszner@ug.edu.pl}{https://orcid.org/0000-0003-1902-7580}{}
\author{Ehsan Latif}{AI4STEM Education Center, University of Georgia, Athens, GA, USA \and \url{https://ehsanlatif.github.io/}}{ehsan.latif@uga.edu}{https://orcid.org/0009-0008-6553-4093}{}
\author{Ramviyas Parasuraman}{School of Computing, University of Georgia, Athens, GA, USA \and \url{https://computing.uga.edu/directory/people/ramviyas-nattanmai-parasuraman}}{ramviyas@uga.edu}{https://orcid.org/0000-0001-5451-8209}{}
\author{Paul Spirakis}{Department of Computer Science, University of Liverpool, UK. \and \url{https://www.liverpool.ac.uk/people/paul-spirakis}} 
{P.Spirakis@liverpool.ac.uk}{https://orcid.org/0000-0001-5396-3749}{Supported by the EPSRC grant EP/P02002X/1.}
\author{Grzegorz Stachowiak}{Institute of Computer Science, University of Wroc{\l}aw, Poland \and \url{https://ii.uni.wroc.pl/~gst/}}
{Grzegorz.Stachowiak@ii.uni.wroc.pl}{https://orcid.org/0000-0003-0463-3676}{Supported by the NCN grant 2020/39/B/ST6/03288.}
\keywords{Population Protocols, Localisation, Spacial Queries, Self-Stabilisation}
\authorrunning{L. Gąsieniec, {Ł}. Kuszner, E. Latif, R. Parasuraman, P. Spirakis, G. Stachowiak}
\begin{document}
\maketitle

\begin{abstract}
In the \textit{distributed localization problem} (DLP), \(n\) anonymous robots (agents) \(\agentsymbol{0}, \dots, \agentsymbol{n-1}\) begin at arbitrary positions \(\agentposition{0}, \dots, \agentposition{n-1} \in S\), where \(S\) is a Euclidean space. Initially, each agent \(\agentsymbol{i}\) operates within its own coordinate system in \(S\), which may be inconsistent with those of other agents. The primary goal in DLP is for agents to reach a consensus on a unified coordinate system that accurately reflects the relative positions of all points, \(\agentposition{0}, \dots, \agentposition{n-1}\), in \(S\).
Research on DLP has largely focused on the feasibility and complexity of achieving consensus with limited inter-agent distance data, often due to missing or imprecise information. In contrast, this paper explores a minimalist, computationally efficient distributed computing model where agents can access all pairwise distances if required.
%
%
Specifically, we introduce a novel variant of population protocols, referred to as the \textit{spatial population protocols} model. In this variant each agent can memorise one or a fixed number of coordinates, and when agents \(\agentsymbol{i}\) and \(\agentsymbol{j}\) interact, they can not only exchange their current knowledge but also either determine the distance \(\distance{i}{j}\) between them in \(S\) (distance query model) or obtain the vector \(\agentvector{i}{j}\) spanning points \(\agentposition{i}\) and \(\agentposition{j}\) (vector query model).

We propose and analyse several distributed localisation protocols, including:
%
\begin{enumerate}

%
\item {\em Leader-based localisation protocol with distance queries} 
We propose and analyse two leader-based localisation protocols that stabilize silently in $o(n)$ time. These protocols leverage an efficient solution to the novel concept of {\em multi-contact epidemic}, a natural generalisation of the core communication tool in population protocols, known as the {\em one-way epidemic}.

\item {\em Self-stabilising leader localisation protocol with distance queries}
We show how to effectively utilise a leader election mechanism within the leader-based localisation protocol to get DLP protocol that self-stabilises silently in time $O(n(\log n/n)^{1/(k+1)}\log n)$ in $k$-dimensions. 

\item {\em Self-stabilising localisation protocol with vector queries} 
We propose and analyse an optimally fast DLP protocol which self-stabilises silently in $O(\log n)$ time.
\end{enumerate}

We conclude by outlining future research directions for distributed localisation in spatial population protocols, covering higher dimensions, limited precision, and error susceptibility.


\end{abstract}

\section{Introduction}

Location services are crucial for modern computing paradigms, such as pervasive computing and sensor networks. While manual configuration and GPS can determine node locations, these methods are impractical in large-scale or obstructed environments. Recent approaches use network localisation, where beacon nodes with known positions enable other nodes to estimate their locations via distance measurements. Key challenges remain, including determining conditions for unique localisability, computational complexity, and deployment considerations.
%
%
%
In the \textit{distributed localisation problem} (DLP), \(n\) anonymous robots (agents) \(\agentsymbol{0}, \dots, \agentsymbol{n-1}\) begin at arbitrary positions \(\agentposition{0}, \dots, \agentposition{n-1} \in S\), where \(S\) is a Euclidean space. Initially, each agent \(\agentsymbol{i}\) operates within its own coordinate system in \(S\), which may be inconsistent with those of other agents. The primary goal in DLP is for agents to reach a consensus on a unified coordinate system that accurately reflects the relative positions of all points, \(\agentposition{0}, \dots, \agentposition{n-1}\), in \(S\).

A network of agents' unique localisability is determined by specific combinatorial properties of its graph and the number of {\em anchors} (agents aware of their real location). 
For example, {\em graph rigidity theory}~\cite{Eren, Hendrickson, Jackson, LNP+23} provides a necessary and sufficient condition for unique localisability~\cite{Eren}. Specifically, a network of agents located in the plane is uniquely localisable if and only if it has at least three anchors and the network graph is {\em globally rigid}. However, unless a network is dense and regular, global rigidity is unlikely. Even without global rigidity, large portions of a network may still be globally rigid, though positions of remaining nodes will remain indeterminate due to multiple feasible solutions.
The decision version of this problem, often referred to as the {\em graph embedding} or {\em graph realisation} problem, requires determining whether a weighted graph can be embedded in the plane so that distances between adjacent vertices match the edge weights, a problem known to be strongly NP-hard~\cite{Saxe}. Furthermore, this complexity holds even when the graph is globally rigid~\cite{Eren}.
In {\em sensor networks}, where nodes measure distances only within a range rr, the network is naturally modeled as a {\em unit disk graph}, with two nodes adjacent if and only if their distance is $\le r$. The corresponding decision problem, {\em unit disk graph reconstruction}, asks whether a graph can be embedded in the plane so that adjacent nodes are within distance $r$, and non-adjacent nodes are farther apart.
This problem is also NP-hard~\cite{AspnesGY04}, indicating that no efficient algorithm can solve localisation in the worst case unless \(P = NP\). Furthermore, even for instances with unique reconstructions, no efficient randomised algorithm exists to solve this problem unless \(P = NP\)~\cite{AspnesGY04}.
Although several distributed computing models for mobile agents with properties similar to population protocols have been proposed, see the comprehensive survey~\cite{FPS19}, the localisation problem has not been studied in any of them.

Distributed localisation is also crucial in {\em robotic systems}, enabling robots to autonomously determine their spatial position within an environment  -- a fundamental requirement for applications such as navigation, mapping, and multi-robot coordination \cite{yue2020collaborative}. Accurate localisation allows robots to interact more effectively with their surroundings and with each other, facilitating tasks from autonomous driving to warehouse automation and search-and-rescue operations \cite{latif2023seal, misaros2023autonomous}. Localisation approaches generally fall into two broad categories: centralised and distributed systems \cite{zafari2019survey}. Centralised localisation systems, where a central server or leader node computes the locations of all robots, can offer high accuracy but may struggle with scalability and robustness, especially in dynamic or communication-limited environments \cite{latif2023instantaneous}. In contrast, distributed localisation systems allow each robot to perform localisation computations independently or in collaboration with neighbouring robots, enhancing adaptability and resilience, although this may come at the cost of increased complexity \cite{latif2022dgorl, latif2022multi}. Within distributed systems, leader-based localisation mechanisms involve one or more designated robots that serve as reference points or coordinators for localisation \cite{fareh2023logarithmic}, which can streamline computations but may create single points of failure. Leaderless localisation, where all robots contribute equally to position estimation without relying on specific leader nodes, is advantageous in decentralised applications where flexibility and fault tolerance are paramount \cite{latif2022dgorl, wang2023distributed}. Both methods have been explored using probabilistic \cite{xu2020probabilistic}, geometric \cite{latif2023gprl}, and graph-based models \cite{latif2022dgorl}, with leaderless approaches gaining traction due to their robustness in large-scale and dynamically changing settings. Various methods leverage tools such as Kalman filters \cite{moreira2020new}, particle filters \cite{latif2023instantaneous}, and graph rigidity theory \cite{guo2021semantic} to enhance localisation accuracy and efficiency in complex environments.

Another variant of the localization problem, known as the \( n \)-{\em point location problem}, has been studied in a centralized setting~\cite{Dam03,Dam06}. This problem was motivated by and is closely related to the computation of relative positions of markers on a DNA string~\cite{HIL03,RR00,Mum00}.  
In the \( n \)-{\em point location problem}, the objective is to design one or more rounds of pairwise distance queries between points to determine the relative distances among all points.  
In~\cite{Dam03}, a one-round deterministic strategy is presented that utilizes \( \frac{8n}{5} \) distance queries, along with a proof that any one-round strategy requires at least \( \frac{4n}{3} \) distance queries. The same work also introduces a simple two-round deterministic strategy that uses \( \frac{3n}{2} \) queries.  
An alternative two-round randomised approach, which requires only \( n + O\left(\frac{n}{\log n}\right) \) queries and solves the \( n \)-{\em point location problem} with high probability, is described in~\cite{Dam06}.

\subsection{Spatial population protocols\label{SPP}}
In this paper, we explore a minimalist, computationally efficient model of distributed computing, where agents have probabilistic access to pairwise distances. Our focus is on achieving {\em anonymity} while maintaining high time efficiency and minimal use of network resources, including limited local storage (agent state space) and communication.
To meet these goals, we introduce a new variant of population protocols, referred to as the \textit{spatial population protocols} model, specified later in this section.

The population protocol model originates from the seminal work by Angluin et al.~\cite{DBLP:conf/podc/AngluinADFP04}. This model provides tools for the formal analysis of \textit{pairwise interactions} between indistinguishable entities known as \textit{agents}, which have limited storage, communication, and computational capabilities.
When two agents interact, their states change according to a predefined \textit{transition function}, a core component of the population protocol. It is typically assumed that the agents' \textit{state space} is fixed and that the population size \(n\) is unknown to the agents and is not hard-coded into the transition function.
In \textit{self-stabilising} protocols, the \textit{initial configuration} of agents' states is arbitrary. By contrast, non-self-stabilising protocols start with a predefined configuration encoding the input of the given problem. A protocol concludes when it \textit{stabilises} in a \textit{final configuration} of states representing the solution.
In the \textit{probabilistic variant} of population protocols, which is also used here, the \textit{random scheduler} selects an ordered pair of agents at each step—designated as the \textit{initiator} and the \textit{responder}—uniformly at random from the entire population. The asymmetry in this pair introduces a valuable source of random bits, which is utilised by population protocols.
In this probabilistic setting, besides \textit{efficient state utilisation}, \textit{time complexity} is also a primary concern. It is often measured by the number of interactions, \({\cal I}\), required for the protocol to stabilise in a final configuration. More recently, the focus has shifted to \textit{parallel stabilisation time} (or simply \textit{time}), defined as \({\cal I}/n\), where \(n\) is the population size. This measure captures the parallelism of independent, simultaneous interactions, which is leveraged in \textit{efficient population protocols} that stabilise in time \(O(\text{poly} \log n)\).
All protocols introduced in this paper are {\em stable} (consistently correct), {\em stabilise silently} (agent states cease to change post-stabilization), and guarantee stabilization time with high probability (whp), defined as $1 - n^{-\eta}$ for a constant $\eta > 0$. Furthermore, our primary contribution, the protocols detailed in Sections~\ref{s:self} and \ref{s:vector}, exhibits self-stabilization.

Among the primary tools in our localization protocols, we emphasize the novel concept of {\em multi-contact epidemic}, see Section~\ref{s:leader}, a natural extension of the fundamental communication mechanism in population protocols, known as the {\em one-way epidemic}. Another key tool is {\em leader election}, a cornerstone of distributed computing that tackles the essential challenges of symmetry breaking, synchronization, and coordination. In population protocols, the presence of a leader facilitates a more efficient computational framework~\cite{AngluinAE08}. However, achieving leader election within this model presents significant difficulties.
Foundational results~\cite{DBLP:conf/wdag/ChenCDS14, DBLP:conf/soda/Doty14} demonstrate that it cannot be solved in sublinear time if agents are restricted to a fixed number of states~\cite{DBLP:journals/dc/DotyS18}.
Further, Alistarh and Gelashvili~\cite{DBLP:conf/icalp/AlistarhG15} introduced a protocol stabilising in \(O(\log^3 n)\) time with \(O(\log^3 n)\) states. Later, Alistarh \textit{et al.}~\cite{DBLP:conf/soda/AlistarhAEGR17} identified trade-offs between state use and stabilisation time, distinguishing \textit{slowly} (\(o(\log \log n)\) states) and \textit{rapidly stabilising} (\(O(\log n)\) states) protocols.
Subsequent work achieved \(O(\log^2 n)\) time whp and in expectation with \(O(\log^2 n)\) states~\cite{DBLP:conf/podc/BilkeCER17}, later reduced to \(O(\log n)\) states using synthetic coins~\cite{DBLP:conf/soda/AlistarhAG18, DBLP:conf/soda/BerenbrinkKKO18}. Recent research by G\k asieniec and Stachowiak reduced state usage to \(O(\log \log n)\) while retaining \(O(\log^2 n)\) time whp~\cite{DBLP:journals/jacm/GasieniecS21}. The expected time of leader election was further optimised to $O(\log n\log\log n)$ by G\k asieniec {\em et al.} in \cite{DBLP:conf/spaa/GasieniecSU19} and to the optimal time \(O(\log n)\) by Berenbrink {\em et al.} in~\cite{DBLP:conf/stoc/BerenbrinkGK20}.
In contrast, self-stabilising leader election protocols present unique computational challenges. Notably, it was shown by Cai {\em et al.} in~\cite{CIW12} that such protocols require at least $n$ states, in addition to knowledge of the exact value of $n$.
Alternatively, if loose-stabilisation is allowed (where a leader remains in place for a long but finite time before re-election) an upper bound on $n$ may suffice, see work of Sudo {\em et al.}~\cite{Sudo+20}.
In~\cite{BCC+21,DS18}, it was established that any silently self-stabilising leader election protocol has an expected time complexity of at least $\Omega(n)$. Furthermore, Burman {\em et al.} in~\cite{BCC+21} present silently self-stabilising protocols with expected time $O(n)$ and with high probability (whp) time $O(n \log n)$, both using $n + \Omega(n)$ states. More recently, the space complexity of $O(n \log n)$-time leader election self-stabilising whp protocol has been improved to $n + O(\log^2 n)$ in~\cite{BEG+25} and
to $n + O(\log n)$ in~\cite{GGS25}.
In this work, however, we employ a relatively straightforward leader election mechanism where each agent collects $\log n$ random bits, and a leader is determined by a complete set of $\log n$ 1s. This simple protocol ensures the election of a unique leader with constant probability. When repeated $O(\log n)$ times, this procedure guarantees unique leader election whp. For details see Section~\ref{s:self}.

\paragraph*{Spatial embedding and geometric queries}
While population protocols provide an elegant and resilient framework for randomised distributed computation, they tend to lack spatial embedding. To address this limitation, we introduce a novel {\em spatial} variant of population protocols that extends the transition function to include basic geometric queries. 
In particular, in this model each agent
can memorise one or a fixed number of coordinates, and during an interaction of two agents 
$\agentsymbol{i}$ and $\agentsymbol{j}$, in addition to exchange of their current knowledge, the agents can determine:
\begin{description}
    \item{(1)} the distance $\distance{i}{j}$ separating them in $S$, in {\em distance query model}, and
    \item{(2)} vector $\agentvector{i}{j}$ spanning points $\agentposition{i}$ and $\agentposition{j}$, in {\em vector query model}.
\end{description}
%


\subsection{Importance and structure of presentation}

As previously discussed, the localisation problem is fundamental, with its many variants extensively studied over time.
This work effectively integrates research in (distributed) localisation with rigidity theory of random geometric graphs, leveraging the increasingly popular population protocol model of computation.
Surprisingly, little research has explored population protocols incorporating spatial properties, particularly those supporting geometric queries. The most relevant prior work focuses on efficient majority and leader election protocols for general toroidal grids~\cite{AGR22} and self-stabilizing leader election in rings~\cite{YSM21,YSO+23}.

In this study, we assume agents are distributed in a $k$-dimensional space, where each interaction, viewed as a pairwise geometric query, results in an exchange of states (knowledge held by agents) augmented by information about the relative positions of the interacting agents. 
We present the following results.
%

%
In Section~\ref{s:leader}, we propose and analyse two leader-based localisation protocols in the model with distance queries.
The first protocol (Algorithms~\ref{alg:Pos_k}) stabilises silently in time $O(n(\log n/n)^{1/(k+1)}$ whp (Theorem~\ref{t:PosK}), in $k$-dimensional space. 
For $k=1,$ we present a faster protocol with the stabilisation time \(O((n\log n)^{1/3})\) whp (Theorem~\ref{t:improved}).
These two protocols leverage an efficient solution to the novel concept of {\em multi-contact epidemic} (Section~\ref{k-con}), a natural generalisation of the core communication tool in population protocols, known as the {\em one-way epidemic}.
%
In Section~\ref{s:self},
we show how to effectively utilise leader election within the leader-based localisation protocol (Algorithm~\ref{alg:Pos_k}) in the model with distance queries.
We propose and analyse a DLP protocol that self-stabilises silently in time $O(n(\log n/n)^{1/(k+1)}\log n)$ whp (Theorem~\ref{t:self}), in $k$-dimensional space. 
Finally, in Section~\ref{s:vector}, we present an optimally fast DLP protocol which self-stabilises silently in $O(\log n)$ time whp (Theorem~\ref{t:super}), in the model with vector queries and fixed $k.$

We conclude with a discussion on future research directions for distributed localisation in spatial population protocols, including scenarios that account for higher dimensions, limited precision, and susceptibility to errors.

\section{Leader-based localisation in distance query model \label{s:leader}}

In this section, we discuss two localisation protocols with predefined input stabilising in $o(n)$ time, i.e., after this time labels of all agents become stable. These protocols are non-self-stabilising. We assume that one of the agents starts as the \emph{leader} of the population. If the identity of the leader is not known, the localisation protocol can be preceded by one of the leader election protocols discussed in the introduction.
The agents' positions \(p_0, \dots, p_{n-1}\) are distributed in a \(k\)-dimensional Euclidean space \(S\). 
It is assumed that any \(k+1\) agents' positions span the entire space. For example, in two-dimensional space, this assumption guarantees that no three points are collinear.
Although our algorithms can be adapted to handle an arbitrary distribution of agents' positions, the time guarantees of such adaptations would be weaker. The state of an agent can accommodate a fixed (related to $k$) number of agent positions and distances.

We adopt a symmetric model of communication, which means that when agents $A_u$ and $A_v$ interact, they both gain access to each other's states as well as the distance $d_{uv}$. The transitions assigned to the leader are distinct from those of the other agents, and the leader also serves as the initiator of the entire process.
Initially, the state of each agent \(A_u\) stores a label \(x_u\) (representing a hypothetical position in \(S\)) and its colour \(C[A_u]\). We assume that at the beginning, the leader is coloured \texttt{green}, and each non-leader's colour is set to \texttt{blue}.
Finally, the leader's label (position in $S$) is set at the origin of the coordinate system, i.e., this label is used as the anchor in the localisation process.

\subsection{Multi-contact epidemic\label{k-con}}

We introduce and analyse the process of $k$-\emph{contact epidemic} (a multi-contact epidemic with a fixed parameter $k$), a natural generalization of epidemic dynamics in population protocols. In this process, the population initially contains at least $k$ \texttt{green} agents, while the remaining agents are \texttt{blue}. A \texttt{blue} agent turns \texttt{green} after interacting with $k$ distinct \texttt{green} agents. We demonstrate that the time complexity of this process is  \(O(n^{1 - 1/k} \log n)\), for any fixed integer $k$.  

We begin with two key lemmas to establish this result.

\begin{lemma}\label{double_k}
The time needed to transition from a configuration with \(m\) \texttt{green} agents to a configuration with \(2m\) \texttt{green} agents for $m<n^{1/k}\log n$ is 
\[
(c+k)\frac{n^{1-1/k}(\log n)^{1/k}}{\sqrt{m}}
\]
with high probability (whp), for some constant $c>0$.
\end{lemma}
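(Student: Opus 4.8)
The plan is to upper-bound the doubling time by the time until a cleanly analysable \emph{sufficient} event occurs, and then to show that event happens within the stated time \emph{whp} by a negative-association argument. First I would exploit monotonicity: an agent never reverts from \texttt{green} to \texttt{blue}, so the \texttt{green} count is non-decreasing and stays $\ge m$ throughout the phase. Fix a set $G_0$ of $m$ agents that are \texttt{green} at the start of the phase; every member of $G_0$ remains \texttt{green} for the whole phase. Since meeting any $k$ distinct \texttt{green} agents already turns a \texttt{blue} agent \texttt{green}, it suffices to bound the time by which $m$ \texttt{blue} agents have each met $k$ \emph{distinct} members of $G_0$: restricting attention to $G_0$ can only make this slower, so an upper bound on its completion time bounds the doubling time. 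Write $t:=(c+k)\,n^{1-1/k}(\log n)^{1/k}/\sqrt m$, i.e.\ $tn$ interactions.

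Next I would analyse a single \texttt{blue} agent $A$. Each of the $tn$ interactions independently is an ``$A$ meets a member of $G_0$'' event with probability $2m/(n(n-1))$, so the number $M_A$ of such meetings is binomial with mean $\lambda:=2mt/n=2(c+k)\sqrt m\,(\log n/n)^{1/k}$; by the hypothesis $m<n^{1/k}\log n$ we have $\lambda=o(1)$, placing us in the rare-event regime. Conditioned on $M_A\ge k$, the met members of $G_0$ are i.i.d.\ uniform on $G_0$, so the first $k$ of them are distinct with probability at least $\prod_{0\le i<k}(1-i/m)\ge 1-\binom k2/m\ge\tfrac12$ once $m\ge k^2$ (the finitely many smaller values of $m$ only affect the constants, since $k$ is fixed). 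Combining this with the estimate $\Pr[M_A\ge k]\ge \gamma_1\lambda^k$ (valid because $\lambda\le k$) yields
\[
\Pr[A\text{ turns \texttt{green} within time }t]\ \ge\ \gamma_2\,(c+k)^k\,\frac{m^{k/2}\log n}{n},
\]
for positive constants $\gamma_1,\gamma_2$ depending only on $k$.

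Finally I would turn this per-agent bound into a high-probability count. Apply the previous estimate to a set of $\lceil n/4\rceil$ \texttt{blue} agents, which exist since in the regime $m<n^{1/k}\log n$ (for $k\ge 2$) the \texttt{green} count $g\le 2m$ is a small fraction of $n$; the case $k=1$ is the classical one-way epidemic. The meeting counts $(M_A)_A$ are negatively associated, because each interaction increments at most one of them; adjoining to each agent an independent uniform ``distinctness coin'' and invoking the standard closure properties of negative association, the $0/1$ indicators ``$A$ turns \texttt{green} via $G_0$ by time $t$'' dominate a negatively associated family of Bernoulli variables with the mean displayed above. That dominating-from-below sum has mean $\mu\ge\gamma_3\,(c+k)^k\,m^{k/2}\log n$, and the Chernoff lower-tail bound for negatively associated variables gives $\Pr[\text{fewer than }\mu/2\text{ new \texttt{green} agents}]\le e^{-\mu/8}\le n^{-\eta}$ as soon as $c$ is chosen large enough that $\gamma_3(c+k)^k\ge 8\eta$ (here $m^{k/2}\ge 1$ is used). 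Since moreover $\mu/2\ge m$ for all $k\ge 2$ (as $m^{k/2-1}\log n\ge 1$), whp at least $m$ new \texttt{green} agents appear by time $t$, which is exactly the claimed bound.

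The step I expect to be the main obstacle is the last one — converting ``every \texttt{blue} agent turns \texttt{green} with the right probability'' into a whp lower bound on how many actually do, despite the dependencies introduced by shared interactions. The role of the fixed witness set $G_0$ and of the per-agent counts $M_A$ is precisely to make those dependencies negatively associated rather than arbitrary, so that a Chernoff-type bound still applies; once that structure is in place, the rest — the estimate $\Pr[M_A\ge k]\ge\gamma_1\lambda^k$, choosing $c$ with $\gamma_3(c+k)^k\ge 8\eta$, and checking $\mu/2\ge m$ for $k\ge 2$ — is routine.
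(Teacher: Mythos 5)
Your proposal is correct in the main regime and takes a genuinely different route from the paper. The paper proves the lemma by a time decomposition: it splits the budget into $k-1$ unit periods plus one $c$-period, and inductively tracks lower bounds $n_i$ on the number of agents that have accumulated $i$ contacts after period $i$, converting enough $(k-1)$-contact agents into \texttt{green} in the final period via sums of time-indexed indicators $X_t$ and the Chernoff-type Lemma~\ref{czernow}. You instead decompose over agents in a single window: fix the initial witness set $G_0$, bound per agent $\Pr[M_A\ge k]\gtrsim\lambda^k$ with $\lambda=\Theta\bigl((c+k)\sqrt m(\log n/n)^{1/k}\bigr)$, handle distinctness by conditional uniformity, and then concentrate across agents via negative association plus a Chernoff lower tail. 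This avoids the paper's pipeline induction entirely (and in fact yields $\Theta(m^{k/2}\log n)$ new \texttt{green} agents, more than the $m$ needed), at the cost of importing NA machinery; conversely the paper's period-by-period argument needs no cross-agent dependence control but must carry the somewhat delicate $n_i$ bookkeeping. Two small points you should tighten. First, ``each interaction increments at most one $M_A$'' is not by itself a sufficient condition for negative association; what saves you is that the scheduler's steps are i.i.d.\ categorical trials (category $A$ if the pair is $\{A,g\}$ with $g\in G_0$, else null), so $(M_A)_A$ is a multinomial count vector, which is NA by the standard Joag-Dev--Proschan result, and your closure steps (increasing functions of disjoint coordinates, adjoining independent distinctness coins) then go through; you also need $m\ge k$ for the distinctness probability to be positive, which holds since the process starts with $k$ \texttt{green} agents. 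Second, your treatment of $k=1$ is only a gesture: for $k=1$ the constraint $m<n\log n$ is essentially vacuous, the blue population need not contain $n/4$ agents, and for $m$ well above $\log^2 n$ the stated bound $(c+1)\log n/\sqrt m$ is smaller than the $m/n$ parallel time needed just to host $m$ infecting interactions, so the statement cannot be recovered there by appeal to the classical epidemic either; since the paper's own proof degenerates at $k=1$ as well and the lemma is applied with contact parameter at least $2$, this does not affect how the lemma is used, but it is worth flagging rather than eliding.
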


\begin{proof}
Assume that the number of \texttt{green} agents is exactly $m$. Consider \(k-1\) successive periods of length \(n^{1-1/k}(\log n)^{1/k}/\sqrt{m} \)
and an additional period of length \(cn^{1-1/k}(\log n)^{1/k}/\sqrt{m} \). We show that after all these periods, we obtain at least \(m\) new green agents whp.
 
We prove by induction that after the first $i$ periods 
there are at least $n_i$ agents that had at least $i$ contacts (interactions with \texttt{green} agents) for $i=0,1,2,\ldots,k-1$, where $n_0=n-m$ and for $i>0$,
\[n_i=m(m-1)(m-1)\cdots(m-i+1)\cdot n \left(\frac{\log n}{n}\right)^{i/k}(\sqrt{m})^{-i/2}.
\]

Now we prove the inductive step.
Assume that after initial $i-1$ periods there are at least $n_{i-1}$
agents with at least $i-1$ contacts.
Let \(X_t\) be a random variable that equals 1 if, at time \(t\), an agent with $i-1$ contacts experiences a new contact (with its $i$-th \texttt{green} agent), and 0 otherwise. 
If in time $t$ less than $n_i$ agents had $i$ contacts 
\[\Pr(X_t = 1) >  2(m-i)\frac{n_{i-1}-n_i}{n(n-1)}>1.5(m-i)\frac{n_{i-1}}{n}.\]
After the $i$-th period of length $n^{1-1/k}(\log n)^{1/k}/\sqrt{m}$ the expected value of $EX=\sum_t EX_t$ is at least 
\[1.5(m-i)\frac{n_{i-1}}{n}\cdot n\cdot \frac{n^{1-1/k}(\log n)^{1/k}}{\sqrt{m}}> 1.4 n_i  .\]
By Lemma \ref{czernow} this number is at least $n_i$ whp.
Thus also the number of
agents that had at least $i$ contacts is at least $n_i$ whp.

After $k-1$ periods of length $n^{1-1/k}(\log n)^{1/k}/\sqrt{m}$ there are at least $n_{k-1}$
agents that had at least $k-1$ contacts.
Now let us add an additional period of length $cn^{1-1/k}(\log n)^{1/k}/\sqrt{m}$.
We show that after this period we will have at least $m$ new green agents whp.
Let \(X_t\) be a random variable that equals 1 if, at time \(t\), an agent with $k-1$ contacts experiences a new contact (with its $k$-th \texttt{green} agent), and 0 otherwise.
Note that each time $X_t=1$, a new \texttt{green} agent is produced.
As long as less than $m$ agents became green, 
\[\Pr(X_t = 1) >  2(m-k+1)\frac{n_{k-1}-m}{n(n-1)}>\frac{n_{k-1}}{n}.\]
After one extra period of length $cn^{1-1/k}(\log n)^{1/k}/\sqrt{m}$ the expected value $EX=\sum_t EX_t$ is at least 
$cm(m-1)\cdots (m-k+1)\cdot m^{-k/2}\log n>cm\log n$.
By Lemma \ref{czernow} this number is at least $m$ whp.
Thus also the number of newly generated \texttt{green} agents is at least $m$ whp.
\end{proof}

\begin{lemma}\label{finish_k}
Starting with at least $n^{1/k}\log n$ \texttt{green} agents guarantees recolouring all $n$ agents to \texttt{green} in time 
$O(n^{1-1/k})$. 
\end{lemma}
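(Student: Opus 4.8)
We start with at least $n^{1/k}\log n$ green agents. We need to show all $n$ agents become green in $O(n^{1-1/k})$ time.

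The plan is to continue the doubling strategy from Lemma~\ref{double_k} but with a different bookkeeping, since once $m \ge n^{1/k}\log n$ the bound in Lemma~\ref{double_k} no longer applies (that lemma explicitly assumed $m < n^{1/k}\log n$). With roughly $m$ green agents, and any fixed blue agent $A_u$ with $i-1$ contacts already accumulated, the probability that a single interaction gives $A_u$ a new green contact is $\Theta(m/n)$ per step (up to the constant factor $2$ from the ordered-pair scheduler, as in the previous proof). So across a window of $cn/m$ steps (that is, $c/m$ parallel time — wait, parallel time is $\mathrm{steps}/n$, so a window of $cn^2/m$ steps is $cn/m$ parallel time), each blue agent acquires $\Omega(1)$ expected new contacts; more carefully, a window of $\Theta(k n^2/m)$ steps gives each still-blue agent $\Omega(k)$ expected new green contacts, which by a Chernoff bound (Lemma~\ref{czernow}) is at least $k$ whp, hence that agent turns green. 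Actually I want to be a bit more careful: I would run $k$ sub-phases, each of length $\Theta(n^2/m)$ steps $= \Theta(n/m)$ parallel time; in sub-phase $j$, every agent that entered the sub-phase with $\ge j-1$ contacts gains at least one more whp, so after $k$ sub-phases all agents that were blue at the start of the phase are green. This costs $O(kn/m) = O(n/m)$ parallel time (for fixed $k$) and at least doubles the green count (in fact makes everyone green, but I only need doubling to drive the geometric sum).

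Summing this over a doubling schedule, with $m$ running through values $n^{1/k}\log n,\ 2n^{1/k}\log n,\ 4n^{1/k}\log n,\ \dots,\ \Theta(n)$, the total parallel time is
\[
\sum_{j\ge 0} O\!\left(\frac{n}{2^{j}\,n^{1/k}\log n}\right) = O\!\left(\frac{n}{n^{1/k}\log n}\right)\sum_{j\ge 0}2^{-j} = O\!\left(\frac{n^{1-1/k}}{\log n}\right) = O(n^{1-1/k}).
\]
So the dominant term is the very first doubling step, when $m$ is smallest, and the geometric series absorbs everything else; the claimed bound $O(n^{1-1/k})$ follows (with room to spare — in fact $O(n^{1-1/k}/\log n)$, but the weaker statement is all that is claimed and all that is needed downstream).

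The main obstacle is the union bound over all the doubling sub-phases: there are $O(\log n)$ doubling steps, each with $k$ sub-phases and up to $n$ per-agent Chernoff events, so $O(n\log n)$ bad events total. Each must fail with probability $n^{-\Omega(1)}$ with a large enough hidden constant; this is why each sub-phase needs length a sufficiently large constant times $n^2/m$ (so the expected number of new contacts per agent, which is $\Theta(n^2/m \cdot m/n \cdot 1/n) = \Theta(1)$ per unit of that constant, exceeds $k$ by a comfortable margin, making the Chernoff tail beat $n^{-\eta}$ for any desired $\eta$). Since $m \ge n^{1/k}\log n$ throughout, the expected count in each sub-phase is actually $\Omega(\log n)$ when $m$ is near its minimum and only larger later, which is exactly the slack that makes the Chernoff bounds and the union bound go through. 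A minor secondary point to check is that the per-step success probability estimate $\Pr(X_t=1) \ge c\,m/n$ remains valid throughout the sub-phase: as long as the targeted agent is still blue it has had fewer than $k$ green contacts, so at least one of the $\ge m$ green agents (indeed at least $m-k+1$ of them) is a "new" contact for it, and $m-k+1 = \Omega(m)$ for fixed $k$ and $m \ge n^{1/k}\log n \ge \log n$.
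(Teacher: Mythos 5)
There is a genuine gap in your concentration step. In a sub-phase of $\Theta(n^2/m)$ interactions, a fixed still-\texttt{blue} agent has per-interaction probability $\Theta(m/n^2)$ of meeting a new \texttt{green} agent, so its expected number of new contacts in that sub-phase is $\Theta(n^2/m)\cdot\Theta(m/n^2)=\Theta(1)$ --- the $m$ cancels. Your closing remark that ``the expected count in each sub-phase is actually $\Omega(\log n)$ when $m$ is near its minimum'' is therefore a miscalculation: the expectation is a constant independent of $m$, Lemma~\ref{czernow} (which explicitly requires $EX\ge c\log n$) does not apply, and with constant expectation the probability that a given agent fails to gain a contact in a sub-phase is a constant, not $n^{-\eta}$. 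Consequently the union bound over the $n$ agents and $O(\log n)$ sub-phases collapses, and the claim that after $k$ sub-phases ``all agents that were blue at the start of the phase are green'' whp is unsupported; so is the advertised strengthening to $O(n^{1-1/k}/\log n)$.

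The repair is exactly the extra $\log n$ factor in the window length, and it reduces your argument to the paper's much simpler one: since there are at least $n^{1/k}\log n$ \texttt{green} agents throughout (greens never revert), a \emph{single} window of parallel time $ckn^{1-1/k}$, i.e.\ $ckn^{2-1/k}$ interactions, gives each fixed \texttt{blue} agent expected $\Theta(ck\log n)$ meetings with new \texttt{green} agents; Lemma~\ref{czernow} then says it collects at least $k$ (indeed $k+1$) distinct \texttt{green} contacts whp, and a union bound over all $n$ agents finishes the proof. No doubling schedule is needed --- as you yourself observe, your first phase already turns everyone \texttt{green}, so the geometric sum does no work --- and once the sub-phase lengths are corrected to $\Theta((n/m)\log n)$ parallel time the total is $O(n^{1-1/k})$, matching the lemma statement rather than beating it; the $\log n$ you tried to save in window length is precisely what the whp-plus-union-bound requirement takes back.
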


\begin{proof}
If there are altogether $n^{1/k}\log n$ \texttt{green} agents, 
then for any \texttt{blue} agent one can define a random variable $X_t$ 
equal 1 if in time $t$ this agent interacts with a new \texttt{green} agent, and 0 otherwise.
The probability that a \texttt{blue} agent does not become \texttt{green} is $Pr(X_t=1)>0.9n^{1/k}\log n$.
In time $ckn^{1-1/k}$ the value $EX=\sum EX_t>0.9c\log n$.
By Lemma \ref{czernow} $EX>k$ whp, i.e.,\ each agent becomes \texttt{green} whp.
\end{proof}

\begin{theorem}\label{k-contact}
The stabilisation time of \(k\)-contact epidemic is \(O(n^{1 - 1/k} \log^{1/k} n)\) whp.
\end{theorem}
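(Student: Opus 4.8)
The plan is to chain together the two preceding lemmas via a geometric doubling argument. Starting from the initial configuration, which contains at least $k$ \texttt{green} agents, I would repeatedly invoke Lemma~\ref{double_k} to double the number of \texttt{green} agents: if $m_0 \ge k$ is the starting count, then after one phase there are at least $m_1 \ge 2m_0$ green agents, after two phases at least $m_2 \ge 4m_0$, and so on, continuing as long as the current count $m_j$ stays below $n^{1/k}\log n$ (the regime in which Lemma~\ref{double_k} is stated). Since each phase at least doubles the green population, only $O(\log n)$ doubling phases are needed before the count first reaches $n^{1/k}\log n$; at that moment I would switch to Lemma~\ref{finish_k}, which recolours all remaining agents in time $O(n^{1-1/k})$.

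The one genuine computation is summing the phase lengths. By Lemma~\ref{double_k}, the $j$-th doubling phase (begun with at least $2^j m_0$ green agents) takes time at most $(c+k)\, n^{1-1/k}(\log n)^{1/k}/\sqrt{2^j m_0}$ whp, since the stated bound is decreasing in the green count. Summing over $j \ge 0$ gives a geometric series in $2^{-j/2}$, which converges to a constant multiple of its first term, so the total time spent doubling is
\[
\sum_{j \ge 0} (c+k)\frac{n^{1-1/k}(\log n)^{1/k}}{\sqrt{2^j m_0}}
\;=\; O\!\left(\frac{n^{1-1/k}(\log n)^{1/k}}{\sqrt{m_0}}\right)
\;=\; O\!\left(n^{1-1/k}(\log n)^{1/k}\right).
\]
Adding the $O(n^{1-1/k})$ contribution of Lemma~\ref{finish_k} leaves the bound $O(n^{1-1/k}\log^{1/k} n)$, as claimed. (If the initial count already exceeds $n^{1/k}\log n$ one skips directly to Lemma~\ref{finish_k}, which only improves the bound.)

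Two minor points need care. First, Lemma~\ref{double_k} is phrased for a configuration with \emph{exactly} $m$ green agents, whereas after the first phase we only know the count is \emph{at least} $m_j$. This is harmless because the $k$-contact process is monotone: green agents never revert, and a larger green population only increases every $\Pr(X_t = 1)$ appearing in the proof of Lemma~\ref{double_k}; formally one couples with a process that ignores the surplus green agents and applies the lemma to that. Second, the whp guarantee must survive a union bound over the $O(\log n)$ doubling phases together with the final phase; since each phase fails with probability at most $n^{-\eta}$, the union bound yields total failure probability $O(n^{-\eta}\log n) = n^{-\eta'}$ for any fixed $\eta' < \eta$. The main (and only mild) obstacle is precisely this bookkeeping: choosing the constant $c$ and the "whp" constant uniformly across all $O(\log n)$ phases so that a single union bound closes the argument.
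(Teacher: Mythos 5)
Your proposal is correct and follows essentially the same route as the paper: repeatedly apply Lemma~\ref{double_k} to double the \texttt{green} count until it reaches $n^{1/k}\log n$, sum the resulting geometric series of phase lengths, and then finish with Lemma~\ref{finish_k}, with a union bound over the $O(\log n)$ phases. Your additional remarks on monotonicity and uniform constants only make explicit bookkeeping the paper leaves implicit.
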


\begin{proof}
The execution time of \(k\)-contact epidemic is the sum of the times to increase the number of \texttt{green} agents from \(m = k\) to \(m = n\), and can be calculated using Lemmas
\ref{double_k} and \ref{finish_k}
\[
T = O(n^{1-1/k}) + \sum_{m=k,2k,4k,8k,\ldots, n^{1/k}\log n} (c + k) \frac{n^{1 - 1/k}(\log n)^{1/k}}{\sqrt{m}} = O(n^{1 - 1/k} \log^{1/k} n). 
\]
\end{proof}

\subsection{Localisation via multi-contact epidemic}
The localisation protocol presented in this section consists of two parts (see the formal description of Algorithm~\ref{alg:Pos_k} on page~\pageref{alg:Pos_k}).
In the first part of the protocol, the labels of $k+1$ agents (including the leader) become stable (positions of these agents become fixed) and these agents become \texttt{green}.
The counter $i,$ initially set to $1,$ of agents with stable (\texttt{green}) labels is maintained by the leader.
In the second part, the labels of all remaining (\texttt{blue}) agents become stable. And this is done by contacting with $k+1$ different \texttt{green} agents.
And once the agent is positioned, it becomes \texttt{green}.
We refer to this ({\em multilateration}) process as $(k+1)$-contact epidemic, see page \pageref{k-con}.

The positioning of each of the first $k$ \texttt{green} agents requires approval from the leader, who maintains a list of currently registered \texttt{green} agents.
More precisely, after an aspiring to be \texttt{green} agent $A_v$ interaction with all $i<k$ \texttt{green} agents is concluded, to become \texttt{green} $A_v$ must meet the leader and verify its list of \texttt{green} contacts to get approved. And when this happens, the leader updates the counter of \texttt{green} agents, and the new \texttt{green} agent $A_v$ is ready to calculate its projection onto the subspace spanned by its $i$ \texttt{green} predecessors and the leader, as well as its Euclidean distance from this subspace. Namely, the first $i$ coordinates of $A_v$'s label are determined by this projection, and the $(i+1)$-th coordinate (in newly formed dimension) is equal to its distance from the aforementioned subspace.
When positioning the remaining agents, we use the fact that interactions with $k+1$ \texttt{green} agents allow for the unambiguous determination of an agent's position. For detail check Algorithm~\ref{alg:Pos_k}.


\begin{algorithm}[tb]
\caption{Positioning in $k$ dimensions}
\label{alg:Pos_k}

\tcc{{Code executed by $\agentsymbol{u}$ exchanging data from 
 ${\agentsymbol{v}}$ during interaction}}
\Local{$\agentlabel{}$ -- position, $C$ -- colour, $L$ -- list of green agents' (position,distance)}
\Input{$L(A_v), C(A_v), \agentlabel{}(A_v),\distance{u}{v}$}
\Begin{
\tcc{Initial values}
    $L(A_u) \algoassign \texttt{empty list}$ \\
\eIf {$u$ is the leader} 
    {$C(A_u) \algoassign \texttt{green}; x(A_u) \algoassign (0,\dots,0)\ $} 
    {$C(A_u) \algoassign \texttt{blue}$}
\While{not all agents are positioned}{
    \If {$A_u \textrm{ is the leader}  \wedge C(A_v)=\texttt{blue} \wedge |L(A_u)| = |L(A_v)| \le k$}
    {append $(L(A_u),(x(A_v), \distance{u}{v}))$}
    \If{$A_v \textrm{ is the leader}  \wedge C(A_u)=\texttt{blue}
        \wedge |L(A_u)| = |L(A_v)| \leq k$}
    {$C(A_u) \algoassign \texttt{green}$\\
    $x(A_u) \algoassign$ a position consistent with $L(A_u)$ and $\big(x(A_v), \distance{u}{v}\big)$
    }
    \If{$C(A_u) \textrm{ is \texttt{blue}} \wedge C(A_v) 
    = \texttt{green} \wedge (A_v, \distance{u}{v}) \notin L(A_u)$}
    {
        append ($L(A_u), (x(A_v), \distance{u}{v})$)\\
        \If{$|L(A_u)| = k+1$}
        {
            $C(A_u) \algoassign green$ \\
            $x(A_u) \algoassign$ the exact position calculated using $L(A_u)$
        }
    }
}
}
\end{algorithm}

\begin{theorem}\label{t:PosK}
Algorithm \ref{alg:Pos_k} stabilises silently 
in time $O(n (\log n/n)^{1/(k+1)})$ whp.
\end{theorem}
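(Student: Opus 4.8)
The plan is to split the analysis into the two phases described right before the algorithm and bound each separately. In the first phase the leader bootstraps $k$ non-leader agents into the \texttt{green} state one at a time (so that, together with the leader, we have $k+1$ affinely independent anchors spanning $S$); in the second phase the remaining $n-O(1)$ \texttt{blue} agents each collect contacts with $k+1$ distinct \texttt{green} agents, which is exactly the $(k+1)$-contact epidemic of Section~\ref{k-con}. The total stabilisation time is the sum of the two phase times, and I expect the second phase to dominate.

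First I would handle Phase~1. For a fixed $i<k$, an aspiring agent $A_v$ must (a) meet each of the current $i$ \texttt{green} agents to record $(x,\ d)$-pairs, and (b) then meet the leader to get approved and promoted. Each individual ``meet a specified agent'' event has probability $\Theta(1/n^2)$ per step, i.e.\ $\Theta(1/n)$ per unit time, so it takes $O(\log n)$ time whp to realise any one such meeting (Lemma~\ref{czernow}); since $k$ is a fixed constant, Phase~1 requires $O(k\cdot k\cdot\log n)=O(\log n)$ time whp. This is negligible compared with the target bound $O(n(\log n/n)^{1/(k+1)})=O(n^{k/(k+1)}\log^{1/(k+1)} n)$, so I would just remark that Phase~1 contributes a lower-order term. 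The only subtlety is arguing that the leader's ``$|L(A_u)|=|L(A_v)|\le k$'' guard correctly serialises the promotions so that exactly one agent is promoted at each level $i=1,\dots,k$; this follows because the leader only appends a new pair when its own list length matches the partner's, and only a \texttt{blue} agent whose list has grown to length $i$ (matching the leader's) can be promoted at step $i+1$.

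Next, Phase~2. Once $k+1$ \texttt{green} anchors exist, the promotion rule for \texttt{blue} agents is precisely: turn \texttt{green} after interacting with $k+1$ distinct \texttt{green} agents — this is the $(k+1)$-contact epidemic started from $k+1$ \texttt{green} agents. By Theorem~\ref{k-contact} (instantiated with parameter $k+1$ in place of $k$), this stabilises in time $O\!\left(n^{1-1/(k+1)}\log^{1/(k+1)} n\right)=O\!\left(n(\log n/n)^{1/(k+1)}\right)$ whp. I would also note that correctness of the final positions is immediate from the geometric assumption: any $k+1$ of the anchors are affinely independent and span $S$, so $k+1$ distances from independent anchors pin down a point in $S$ uniquely (multilateration), and the anchors themselves are consistent by the incremental projection construction in Phase~1. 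The protocol stabilises \emph{silently} because, once an agent is \texttt{green} with a correct label, none of the three \texttt{if}-guards can fire again for it (the leader's counter has reached $k$; a \texttt{green} agent never re-enters the \texttt{blue} branches), so no state changes afterwards.

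The main obstacle is making the reduction to Theorem~\ref{k-contact} fully rigorous. In the abstract $k$-contact epidemic the set of \texttt{green} agents only grows and every \texttt{green}–\texttt{blue} interaction counts as a ``contact,'' whereas here a \texttt{blue} agent's list $L$ must contain $k+1$ \emph{distinct} green agents and the leader participates both as an anchor and as the approver in Phase~1. I would therefore argue that, after the $O(\log n)$-time Phase~1, the population is in a state with exactly $k+1$ \texttt{green} agents and all others \texttt{blue} with empty lists, and that from that point on the dynamics of the \texttt{blue}-agent \texttt{if}-branch are a faithful copy of the $(k+1)$-contact epidemic (the ``$(A_v,d_{uv})\notin L(A_u)$'' check enforces the distinctness that the epidemic analysis assumes, and interacting with a \texttt{green} non-anchor agent is harmless since by then all $n$ agents will eventually be green). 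A minor additional point is a union bound over the $O(1)+O(\log n)$ ``whp'' events (the finitely many Phase~1 meetings plus the single invocation of Theorem~\ref{k-contact}), which only changes the constant $\eta$ in the failure probability $n^{-\eta}$.
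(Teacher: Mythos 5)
Your Phase~2 reduction to the $(k+1)$-contact epidemic (Theorem~\ref{k-contact} with parameter $k+1$) matches the paper, but your Phase~1 analysis contains a genuine error that breaks the argument. You claim that a prescribed meeting (a given aspiring agent with a given \texttt{green} agent, or with the leader) has probability $\Theta(1/n)$ per unit of parallel time and therefore occurs within $O(\log n)$ time whp. That conflates rate with whp-time: an event of probability $\Theta(1/n^2)$ per interaction needs $\Theta(n^2)$ interactions, i.e.\ $\Theta(n)$ parallel time in expectation and $\Theta(n\log n)$ parallel time to hold whp. So the ``wait for one designated agent to collect its $i$ contacts and then meet the leader'' argument gives a Phase~1 bound of order $n\log n$, which is \emph{larger} than the claimed total $O\bigl(n(\log n/n)^{1/(k+1)}\bigr)$; Phase~1 is not a negligible $O(\log n)$ term and cannot be dismissed this way.

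The paper closes exactly this gap with Lemma~\ref{ini_k}: rather than tracking one designated candidate, it exploits the fact that all $n-i$ \texttt{blue} agents race in parallel, and runs a pipelined induction showing that after $j$ periods of length $n^{1-1/i}(\log n)^{1/i}$ at least $n_j=n(\log n/n)^{j/i}$ agents have accumulated $j$ contacts with the $i$ current \texttt{green} agents; in a final period one of the $n_{i-1}$ fully-prepared candidates meets the leader whp. This yields $O\bigl(n^{1-1/i}(\log n)^{1/i}\bigr)$ time for positioning the $i$-th \texttt{green} agent, hence $O\bigl(n(\log n/n)^{1/k}\bigr)$ for all of Phase~1, which is indeed dominated by the Phase~2 bound $O\bigl(n(\log n/n)^{1/(k+1)}\bigr)$ — but establishing that domination requires this cascading argument (or something equivalent), not the single-meeting estimate you use. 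Your remaining remarks (distinctness enforced by the $(A_v,\distance{u}{v})\notin L(A_u)$ check, correctness via multilateration from $k+1$ spanning anchors, silence after all agents turn \texttt{green}, and the union bound over the whp events) are fine and consistent with the paper.
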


\begin{proof}
As mentioned earlier Algorithm~\ref{alg:Pos_k} operates in two parts. In the first part, the protocol stabilises the labels of the leader and $k$ extra agents, creating an initial set of \( k+1 \) \texttt{green} agents in time $O(n (\log n/n)^{1/k})$ whp (Lemma~\ref{ini_k}).
The second part stabilises labels of all remaining agents via $(k+1)$-contact epidemic in time $O(n (\log n/n)^{1/(k+1)}$ 
(Theorem \ref{k-contact}).
\end{proof}

First we formulate the following lemma.

\begin{lemma}\label{czernow}
Consider $X=X_1+X_2+\cdots+X_n$ of $n$ independent $0-1$ random variables and any $\delta>0$. If the expected value $EX\ge c\log n,$ for $c>0$ large enough, then
$|X-EX|<\delta EX$ holds whp.
\end{lemma}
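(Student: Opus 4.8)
The plan is to derive Lemma~\ref{czernow} directly from the standard multiplicative Chernoff bounds for sums of independent $0$–$1$ random variables. Recall that for $X=X_1+\cdots+X_n$ with the $X_i$ independent and $\{0,1\}$-valued, writing $\mu=EX$, one has $\Pr\big(X\ge(1+\varepsilon)\mu\big)\le\exp(-\varepsilon^2\mu/3)$ and $\Pr\big(X\le(1-\varepsilon)\mu\big)\le\exp(-\varepsilon^2\mu/2)$ for every $0<\varepsilon\le 1$, and $\Pr\big(X\ge(1+\varepsilon)\mu\big)\le\exp(-\varepsilon\mu/3)$ for $\varepsilon\ge 1$. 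These are the only inputs needed; everything else is bookkeeping in the parameter $\delta$.

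First I would treat the case $0<\delta\le 1$. Applying the two two-sided bounds with $\varepsilon=\delta$ and taking a union bound gives
\[
\Pr\big(|X-EX|\ge\delta EX\big)\;\le\;2\exp\!\big(-\delta^2 EX/3\big).
\]
Substituting the hypothesis $EX\ge c\log n$ yields a failure probability at most $2n^{-c\delta^2/3}$, which is below $n^{-\eta}$ once $c$ is chosen large enough relative to $\delta$ and the fixed constant $\eta$ defining ``whp'' (e.g.\ $c\ge 6\eta/\delta^2$ suffices for all large $n$, the factor $2$ being absorbed). Hence the claim holds whp in this regime.

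For $\delta>1$ the upper tail is handled by the $\varepsilon\ge 1$ bound, giving $\Pr\big(X\ge(1+\delta)EX\big)\le\exp(-\delta EX/3)\le n^{-c\delta/3}$. The lower tail is essentially trivial: the inequality $|X-EX|<\delta EX$ can fail on the low side only if $X\le(1-\delta)EX\le 0$, i.e.\ only if $X=0$; and $\Pr(X=0)=\prod_i(1-p_i)\le\exp(-\sum_i p_i)=\exp(-EX)\le n^{-c}$. Combining, the failure probability is at most $n^{-c\delta/3}+n^{-c}\le n^{-\eta}$ for $c$ large enough, which finishes the proof.

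There is no genuine obstacle here; the statement is a packaging of textbook concentration inequalities tailored to the paper's ``whp'' convention. The only points requiring a line of care are (i) splitting on whether $\delta\le 1$ or $\delta>1$, since the one-sided Chernoff exponent degrades from $\delta^2$ to $\delta$ in the large-deviation regime, and (ii) making explicit that ``$c$ large enough'' means large relative to both $\delta$ and $\eta$, so that the polynomial decay $n^{-\Theta(c\min\{\delta,\delta^2\})}$ dominates the target $n^{-\eta}$ and swallows the constant prefactor for all sufficiently large $n$.
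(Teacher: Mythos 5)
Your proposal is correct and follows essentially the same route as the paper: both apply the standard multiplicative Chernoff bounds to each tail, use the hypothesis $EX\ge c\log n$ to get polynomial decay, and take $c$ large enough (relative to $\delta$ and $\eta$) to beat $n^{-\eta}$. The only cosmetic difference is that you split the cases $\delta\le 1$ and $\delta>1$ explicitly, whereas the paper uses the exponent $\delta^2/(2+\delta)$ valid for all $\delta>0$; your version is, if anything, a bit more careful about the large-$\delta$ regime and the constant prefactor.
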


\begin{proof}
The following equality holds
\[
\Pr(|X-EX|<\delta EX)=\Pr(X>(1+\delta)EX)+\Pr(X<(1-\delta)EX)
\]
By Chernoff inequalities, for $c$ large enough and any parameter $\eta$, we get 
\[
\Pr(X>(1+\delta)EX)<e^{-EX\delta^2/(2+\delta)}<e^{-c\delta\log n}<n^{-\eta}/2
\]
and
\[
\Pr(X<(1-\delta)EX)<e^{-EX\delta^2/2}<e^{-c\delta^2\log n/2}<n^{-\eta}/2.
\]
Thus $\Pr(|X-EX|<\delta EX)<n^{-\eta}$.
\end{proof}

Now we formulate a lemma determining the time needed to position $i$-th agent
during the process of positioning the first $k$ green agents.

\begin{lemma}\label{ini_k}
Algorithm~\ref{alg:Pos_k} (first part), the $i$-th \texttt{green} agent is positioned in parallel time $O(n^{1-1/i}(\log n)^{1/i})$ whp.
\end{lemma}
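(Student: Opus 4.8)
The plan is to read the first part of Algorithm~\ref{alg:Pos_k} as a sequence of single‑contact rounds. Write $g_1,g_2,\dots$ for the \texttt{green} agents in the order in which their labels are fixed (the leader~$L$, which is \texttt{green} at the origin from the outset, is not counted), and put $G_i:=\{L,g_1,\dots,g_{i-1}\}$, so $|G_i|=i$. Inspecting the transitions, a \texttt{blue} agent $A_v$ becomes $g_i$ once it has interacted directly with all $i$ agents of $G_i$: the $i-1$ earlier agents $g_1,\dots,g_{i-1}$ each contribute a recorded (position, distance) pair, and a later interaction with the leader supplies the $i$‑th reference pair together with the approval (the list‑length guard in the leader's rule forces that leader interaction to be the last one, and it also requires the leader's own list to already hold the matching $i-1$ records — an invariant restored within $O(\log n)$ extra time after each previous promotion). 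So it suffices to bound the first time, measured from the moment $g_{i-1}$ is positioned, at which some \texttt{blue} agent completes these $i$ contacts.

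For one fixed \texttt{blue} agent this is an easy first‑moment estimate. Let $t_{i-1}$ be the parallel time at which $g_{i-1}$ turns \texttt{green}, and consider the window $[t_{i-1},t_{i-1}+\tau]$ with $\tau=O(n)$. A fixed agent $A_v$ takes part in $\Theta(\tau)$ interactions in this window, each with a uniformly random partner, so it meets any fixed member of $G_i$ with probability at least $1-e^{-2\tau/(n-1)}\ge c_1\tau/n$; since the $i$ meeting events are negatively correlated, $A_v$ meets all of $G_i$ (in the order described above) with probability at least $c_2(\tau/n)^i$ for a constant $c_2=c_2(i)>0$. Hence the expected number of \texttt{blue} agents eligible to become $g_i$ within the window is at least $c_2(n-i)(\tau/n)^i$, which is $\Theta(C^i\log n)$ once $\tau=C\,n^{1-1/i}(\log n)^{1/i}$.

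Turning this expectation bound into an ``at least one eligible agent whp'' statement is the delicate step, because the \texttt{blue} agents compete for the bounded attention of the $i$ agents of $G_i$ and share the scheduler's coins, so their progresses are not independent. I would mirror the phased argument of Lemma~\ref{double_k}: split the window into $i$ equal sub‑phases and prove by induction on $j=1,\dots,i$ that after the $j$‑th sub‑phase at least $\nu_j=\Theta\!\big(n(\tau/n)^j\big)$ \texttt{blue} agents have met $j$ of the agents of $G_i$ (for $j<i$ none of them the leader; for $j=i$, the leader met last). In sub‑phase $j$ the number of new advancements stochastically dominates a binomial variable whose mean is a constant times $\nu_{j-1}\cdot(\tau/n)$ times the number of not‑yet‑met agents, so Lemma~\ref{czernow} yields the bound whenever that mean is $\Omega(\log n)$; this holds with room to spare for $j\le i-1$ (indeed $\nu_{i-1}=\Theta\!\big(n^{1/i}(\log n)^{1-1/i}\big)=\omega(\log n)$ since $i$ is a constant), while for $j=i$ one has $\nu_i=\Theta(C^i\log n)\ge 1$ for $C$ a large enough constant, with $c$ as in Lemma~\ref{czernow} — exactly the claim that $g_i$ is positioned within parallel time $\tau=O\!\big(n^{1-1/i}(\log n)^{1/i}\big)$ whp.

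I expect the main obstacle to be precisely this last upgrade: the per‑agent probability is immediate, but the events ``$A_v$ is eligible by time $\tau$'' are correlated across $v$, and a plain union or second‑moment bound is too lossy; the phased coupling is the device that resolves it, because inside a single phase the advancement count is dominated by a sum of independent Bernoulli trials to which Lemma~\ref{czernow} applies directly. Secondary, mechanical points are the side condition on the leader's list length (handled by the lockstep invariant, costing only $O(\log n)$ extra time per round) and the fact that $g_i$'s window opens only at $t_{i-1}$; summing the per‑round bounds over $i=1,\dots,k$ then telescopes to the $O\!\big(n^{1-1/k}(\log n)^{1/k}\big)$ figure invoked in the proof of Theorem~\ref{t:PosK}.
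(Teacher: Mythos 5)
Your proposal is correct and follows essentially the same route as the paper's own proof: the same phased induction over $i$ periods of length $\Theta\!\big(n^{1-1/i}(\log n)^{1/i}\big)$, the same intermediate thresholds $n_j=n(\log n/n)^{j/i}$ for agents with $j$ contacts, and the same final leader-contact period whose expected count $\Theta(\log n)$ is boosted to "at least one" whp via Lemma~\ref{czernow}. Your extra bookkeeping (leader met last, the leader's list invariant) is a more careful reading of the algorithm than the paper's proof bothers with, but it does not change the argument.
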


\begin{proof}
Consider \(i-1\) successive periods of length \(n^{1-1/i}(\log n)^{1/i} \)
and an additional period of length \(cn^{1-1/i}(\log n)^{1/i} \). We show that after all these periods, whp a new green agent is positioned.
   
We now prove by induction that after the first $j$ periods 
there are at least $n_j$ agents that had at least $j$ interactions with green agents for $j=0,1,2,\ldots,i-1$, where $n_0=n-1$ and for $j>0$,
\(n_j=n \left(\frac{\log n}{n}\right)^{j/i}.
\)

We start with the inductive step.
Assume that after initial $j-1$ periods there are at least $n_{j-1}$
agents with at least $j-1$ {\em contacts}, i.e., interactions with at least $j-1$ \texttt{green} agents.
Let \(X_t\) be a random variable that equals 1 if, at time \(t\), an agent with $j-1$ contacts  has a new contact with its $j$-th \texttt{green} agent, and 0 otherwise. 
If in time $t$ less than $n_j$ agents had $j$ contacts then 
\[\Pr(X_t = 1) >  2\frac{n_{j-1}-n_j}{n(n-1)}>1.5\frac{n_{j-1}}{n}.\]
After the $j$-th period of length $n^{1-1/k}(\log n)^{1/k}$ the expected value of $EX=\sum_t EX_t$ is at least 
\[1.5\frac{n_{j-1}}{n}\cdot n\cdot {n^{1-1/i}(\log n)^{1/i}}> 1.4 n_j .\]
By Lemma~\ref{czernow} this number is at least $n_j$ whp.
Thus also the number of
agents that had interactions with at least $j$ green agents is at least $n_j$ whp.

Furthermore, after $i-1$ periods there are at least $n_{i-1}=(n/\log n)^{1/i}\log n$ agents
that experienced $i-1$ contacts.
Consider an extra period of length \(cn^{1-1/i}(\log n)^{1/i} \).
Let \(X_t\) be a random variable that equals 1 if, at time \(t\), an agent with $i-1$ contacts has a contact with the leader, and 0 otherwise. 
If in time $t$ none of these agents had interaction with the leader yet, we get 
\[\Pr(X_t = 1) >  2\frac{n_{i-1}}{n(n-1)}.\]
After an extra period of length $n^{1-1/k}(\log n)^{1/k}$ the expected value of $EX=\sum_t EX_t$ is at least 
\[2\frac{n_{j-1}}{n}\cdot n\cdot c{n^{1-1/i}(\log n)^{1/i}}= 2c\log n  .\]
And by Lemma \ref{czernow} this number is at least $1$ whp.
\end{proof}

\subsection{Faster localisation algorithm}

In this section, we demonstrate how to enhance the performance of Algorithm~\ref{alg:Pos_k} on the line, i.e., within a linear space $S.$ While we focus on one dimension, our observations can be utilised also in higher dimensions. In what follows, we propose an alternative 
localisation protocol which positions agents not only by using 2-contact epidemic but also utilising interactions between agents with a single successful contact coloured later \texttt{greenish}.
In the new algorithm we take advantage of the fact that there are only two types of
interactions between \texttt{greenish} agents
not leading to positioning of both of them.
The first type refers to interactions between agents that became greenish via contacting the same \texttt{green} agent.
In the second, each of the two interacting \texttt{greenish} agents is at the same distance and on the same side of their unique \texttt{green} contact.
We show that these types of interactions do not contribute significantly to the total time of the solution.

\begin{lemma}\label{double_i}
The time needed to increase the number of \texttt{green} agents 
from $m$ to $2m,$ for $m\in [2,n^{0.9}],$ is $O((n\log n/m)^{1/3})$ whp. 
\end{lemma}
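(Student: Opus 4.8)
The plan is to fix a target parallel time $T=C\,(n\log n/m)^{1/3}$ for a sufficiently large constant $C$, split $[0,T]$ into two consecutive periods $P_1,P_2$ of length $T/2$ each, and prove, each with polynomially small failure probability, that (i) by the end of $P_1$ at least $n_1:=\Theta(mT)$ agents have become \texttt{greenish}, i.e.\ have recorded exactly one \texttt{green} contact (their \emph{progenitor}), and (ii) during $P_2$ the greenish agents produce at least $m$ fresh \texttt{green} agents through greenish--greenish interactions; together with the original $m$ this gives the required $2m$, and a union bound over the two bad events proves the lemma.

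Part (i) repeats the bookkeeping of Lemmas~\ref{czernow}--\ref{ini_k}: let $Z_t$ indicate that step $t$ turns a \texttt{blue} agent with empty list into a \texttt{greenish} one (it meets one of the $\ge m$ \texttt{green} agents). While fewer than $n_1$ greenish agents exist, the \texttt{blue} population is still $(1-o(1))n$ --- here the hypothesis $m\le n^{0.9}$ is what makes $m$ and $n_1=\Theta(mT)$ sublinear --- so $\Pr(Z_t=1)=\Omega(m/n)$ and $E\!\big[\sum_{t\in P_1}Z_t\big]=\Omega(mT)=\Omega\!\big((n\log n)^{1/3}\big)$, which exceeds the threshold $c\log n$ of Lemma~\ref{czernow} for $n$ large; hence $\sum_{t\in P_1}Z_t\ge n_1$ whp. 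Fewer than $m$ of these can have left the greenish state by the end of $P_1$ (each such departure creates a new \texttt{green} agent, of which there are fewer than $m$ while the goal is unmet), so $\Omega(n_1)$ agents are greenish when $P_2$ starts.

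The core of the argument is (ii), and \textbf{the main obstacle is bounding the two kinds of ``unproductive'' greenish--greenish encounters} --- those between agents with a common progenitor, and those between agents at equal recorded distances on the same side of their respective progenitors --- so that a constant fraction of greenish--greenish interactions actually positions both participants. For the first kind, write $H_g$ for the number of currently greenish agents with progenitor $g$; it suffices to show $\max_g H_g\le(1-\Omega(1))H$, since then $\sum_g H_g^2\le(\max_g H_g)\,\sum_g H_g\le(1-\Omega(1))H^2$ and a constant fraction of the $\Theta(H^2)$ ordered greenish pairs have distinct progenitors. I would prove this by a short case split: when $m$ is large, every \texttt{green} agent takes part in only $O(T)$ interactions during $P_1$ whp (expected count $T=\omega(\log n)$, so Lemma~\ref{czernow} applies), hence is the progenitor of only $O(T)=O(H/m)=o(H)$ greenish agents; when $m$ is bounded, only $O(m)=O(1)$ \texttt{green} agents ever compete for a newly-greenish agent, so each of two fixed original \texttt{green} agents captures an $\Omega(1)$ fraction of the greenish agents whp. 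For the second kind I use the geometry of the line: fix a greenish agent $a$ with progenitor $g_a$ and recorded distance $d$; any greenish $b$ unproductive with $a$ must itself have recorded distance $d$ and true position $x_g\pm d$ where $g$ is $b$'s progenitor, so --- agent positions being distinct --- each of the $O(m)$ possible progenitors yields at most two such agents, whence $a$ has $O(m)$ unproductive partners of the second kind, an $O(m/H)=O(1/T)=o(1)$ fraction of pairs. So at least a constant fraction (say $1/3$, for $n$ large) of greenish--greenish interactions positions both agents.

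Granting this, (ii) follows from the standard ``as long as the goal is unmet'' coupling: at most $O(m)=o(n_1)$ greenish agents get consumed before $m$ fresh \texttt{green} agents appear, so throughout $P_2$ there are $\Omega(n_1)$ greenish agents and a productive greenish--greenish interaction occurs in a step with probability $\Omega\!\big((n_1/n)^2\big)$. Letting $Y_t$ indicate such an interaction between two still-greenish agents, $E\!\big[\sum_{t\in P_2}Y_t\big]=\Omega\!\big(n_1^2T/n\big)=\Omega(m\log n)$, which --- by taking $C$ large enough --- exceeds both $m$ and the $c\log n$ needed for Lemma~\ref{czernow}; hence $\sum_{t\in P_2}Y_t\ge m/2$ whp. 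Each such interaction positions two agents and distinct ones use disjoint pairs of still-greenish agents, so at least $m$ fresh \texttt{green} agents appear within time $T$, giving the claimed doubling bound. (Any agents that reach \texttt{green} in parallel via the ordinary $2$-contact rule only help and are dropped from this lower bound.)
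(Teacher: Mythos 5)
Your proposal is correct and follows essentially the same strategy as the paper's proof: a first phase of length $\Theta((n\log n/m)^{1/3})$ that builds $\Theta\big(m(n\log n/m)^{1/3}\big)$ \texttt{greenish} agents, then a second phase in which the same two families of unproductive \texttt{greenish}--\texttt{greenish} encounters (common progenitor; equal offset from the progenitor) are shown to be a minority, so that the expected $\Omega(m\log n)$ productive interactions yield $m$ new \texttt{green} agents whp via Lemma~\ref{czernow}. The only deviation is local and harmless: you bound the second kind of unproductive partners deterministically (at most $O(m)$ per agent, since all positions are distinct and such a partner must lie at one of $O(m)$ translated points), where the paper instead bounds the \texttt{greenish} agents falling into the translated set $Z$ by an expectation-plus-Chernoff argument, and your explicit case split for constant $m$ in the same-progenitor count makes explicit a detail the paper leaves implicit.
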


\begin{proof}
First we consider an initial period $I_1$ of length $(n\log n/m)^{1/3}$.
The average number of \texttt{greenish} agents produced by $m$ \texttt{green} agents in time $(n\log n/m)^{1/3}$ is $(m^2n\log n)^{1/3}$.
By Lemma \ref{czernow} this number is at least $0.9(m^2n\log n)^{1/3}$ whp.
For any given \texttt{green} agent $A_u$, one can define a subset ${\cal S}_u$ of \texttt{greenish} agents originating from contact with agents other than $A_u$.
The cardinality of $S_u$ is on average at least $0.9(m-1)(n\log n/m)^{1/3}.$
By Lemma~\ref{czernow} this number is greater than $0.8(m-1)(n\log n/m)^{1/3}$ whp.
For a given agent $A_v$, which turned \texttt{greenish} after contacting \( A_u \), there are fewer than \( 2m \) \texttt{greenish} agents in ${\cal S}_u$ with whom no interaction leads to the positioning of \( A_v \), unless the number of \texttt{green} agents exceeds $2m$ (they are the second kind of non-positioning agents).
Let us consider any set $Z$ of at most $2m$ agents located at points
that are translations of the positions of \texttt{green} agents by a fixed vector.
The expected number of \texttt{greenish} agents belonging to $Z$ is at most
$n\cdot (n\log n/m)^{1/3}\cdot (2m)^2/n^2=4(m^2n\log n)^{1/3}m/n$.
So by Chernoff bound this number is at most $0.1(m^2n\log n)^{-1/3}$ whp.

Thus for a given \texttt{greenish} agent $A_v,$ the number of other \texttt{greenish} agents with whom interactions position $A_v$ is whp at least
\[
0.8(m-1)\left(\frac{n\log n}{m}\right)^{1/3}-0.1(m^2n\log n)^{1/3}> 0.6(m^2n\log n)^{1/3}.
\]
Let $I_2$ be a time interval of length  $c(n\log n/m)^{1/3}$
that follows immediately after $I_1$.
The probability that an interaction $t$ between two \texttt{greenish} agents is the one  that positions them and makes them \texttt{green} is at least $0.6(m^2 n^{-2}\log n)^{2/3}$.
An average number of interactions that position pairs of \texttt{greenish} agents in period $I_2$ is $0.6c m\log n$. By Lemma \ref{czernow} this yields at least $m$ new \texttt{green} agents whp.
\end{proof}

\begin{lemma}\label{finish_i}
The time in which the number of green agents increases
from $n^{0.9}$ to $n$ is $O(n^{0.1}\log n)$. 
\end{lemma}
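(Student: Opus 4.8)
The plan is to treat this as the finishing phase of the $2$-contact epidemic, analysed exactly as in Lemma~\ref{finish_k} but with $n^{0.9}$ \texttt{green} agents playing the role of the $n^{1/2}\log n$ \texttt{green} agents there; with so large a \texttt{green} population the bottleneck is no longer the spread of the epidemic but each individual agent collecting its (at most two) distinct \texttt{green} contacts. First I would record the monotonicity fact that the colour \texttt{green} is absorbing, so once the count of \texttt{green} agents reaches $n^{0.9}$ it stays at least $n^{0.9}$ for the rest of the phase; moreover in this regime the \texttt{greenish}--\texttt{greenish} shortcut is not needed, and any positioning it produces only helps, preserving the claimed upper bound.

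Next, fix an arbitrary not-yet-\texttt{green} agent $A_v$ (it needs at most two fresh \texttt{green} contacts to turn \texttt{green}, and only one if it is already \texttt{greenish}), and consider the time window of length $T = c\,n^{0.1}\log n$ immediately following the moment the \texttt{green} count reaches $n^{0.9}$. Let $X_t$ be the indicator that in interaction $t$ of this window agent $A_v$ meets a \texttt{green} agent it has not met before. As long as $A_v$ is not yet \texttt{green} it has met at most one \texttt{green} agent, so the number of ordered pairs realising a fresh contact for $A_v$ is at least $2(n^{0.9}-1)$, and hence
\[
\Pr(X_t = 1 \mid \text{past}) \;\ge\; \frac{2(n^{0.9}-1)}{n(n-1)} \;>\; n^{-1.1}
\]
for all sufficiently large $n$. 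The window contains $Tn = c\,n^{1.1}\log n$ interactions, so $\sum_t X_t$ stochastically dominates a sum of independent $0$--$1$ variables with expectation at least $c\log n$; taking $c$ large enough and $\delta=\tfrac12$, Lemma~\ref{czernow} gives $\sum_t X_t \ge \tfrac12 c\log n \ge 2$ with high probability, i.e.\ $A_v$ accumulates two distinct \texttt{green} contacts and becomes \texttt{green} within time $T$.

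Finally I would union-bound over the at most $n$ agents that are not \texttt{green} at the start of the phase: with high probability every one of them is \texttt{green} after the single window of length $T = O(n^{0.1}\log n)$, which is the asserted bound. The only delicate point is the dependence in the middle paragraph — the $X_t$ are not literally independent, since ``fresh'' refers to $A_v$'s own history — but the conditional success probability is bounded below by the fixed quantity $n^{-1.1}$ regardless of the past, so the standard coupling with independent Bernoulli trials (already implicit in Lemmas~\ref{double_k} and~\ref{finish_k}) legitimises the appeal to Lemma~\ref{czernow}. Everything else is the routine substitution of $n^{0.9}$ for $n^{1/2}\log n$ in the estimates underlying Lemma~\ref{finish_k}.
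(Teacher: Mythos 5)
Your proposal is correct and follows essentially the same route as the paper's proof: fix one not-yet-\texttt{green} agent, take a single window of length $c\,n^{0.1}\log n$ (i.e.\ $c\,n^{1.1}\log n$ interactions), lower-bound the per-interaction probability of a useful \texttt{green} contact by roughly $2n^{0.9}/n^2$, apply the Chernoff-type Lemma~\ref{czernow} to get at least two such contacts whp, and union-bound over all agents. If anything, you are more careful than the paper about requiring the two \texttt{green} contacts to be distinct and about justifying the domination by independent Bernoulli trials.
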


\begin{proof}
Consider all interaction of agent $A_u$ during period $cn^{0.1}\log n$.
The probability of having at most one interaction with a green agent for $A_u$ is
\begin{multline}
\left(1-\frac{2n^{0.9}}{n^2}\right)^{cn^{1.1}\log n}+
  cn^{1.1}(\log n){0.9}{n^2}\left(1-\frac{2n^{0.9}}{n^2}\right)^{cn^{1.1}\log n-1} < \\
  <c(\log n) e^{c\log n}<e^{2c\log n}=n^{2c\ln 2}.
\end{multline}
So, there is $c>0,$ s.t., all agents have at least two interactions with green agents during the considered period $cn^{0.1}\log n$ whp.
\end{proof}

\begin{theorem}\label{t:improved}
The stabilisation time of the improved algorithm  is \(O((n\log n)^{1/3})\) whp.
\end{theorem}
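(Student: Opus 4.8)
The plan is to decompose the execution of the improved protocol into three phases and bound each separately, mirroring the structure of the proof of Theorem~\ref{k-contact}. Phase~0 (initialisation) brings the population from a single \texttt{green} agent (the leader) to $k+1=2$ \texttt{green} agents; this is the first part of the protocol specialised to $k=1$, and the argument of Lemma~\ref{ini_k} with $i=1$ shows it takes parallel time $O(n^{1-1/1}(\log n)^{1/1})=O(\log n)$ whp (it amounts to waiting for one designated agent to meet the leader, which happens within $O(\log n)$ parallel time whp). Phase~1 (doubling) starts from $m=2$ and repeatedly invokes Lemma~\ref{double_i} to pass from $m$ to $2m$ \texttt{green} agents, until $m$ first exceeds $n^{0.9}$. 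Phase~2 (finishing) recolours the remaining agents and, by Lemma~\ref{finish_i}, costs $O(n^{0.1}\log n)$ whp.

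\textbf{The time bound.} For the doubling phase I would write the total parallel time as the geometric sum
\[
\sum_{m=2,4,8,\dots,\,n^{0.9}} O\!\left(\left(\frac{n\log n}{m}\right)^{1/3}\right)
 = O\!\left((n\log n)^{1/3}\right)\sum_{j\ge 0} 2^{-j/3}
 = O\!\left((n\log n)^{1/3}\right),
\]
since the factor $m^{-1/3}$ decreases geometrically and the whole sum is dominated by its first term $m=2$. Adding the three phase bounds and using $\log n=o((n\log n)^{1/3})$ together with $n^{0.1}\log n=o((n\log n)^{1/3})$ yields the claimed $O((n\log n)^{1/3})$ stabilisation time. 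Since Phase~1 consists of only $O(\log n)$ applications of Lemma~\ref{double_i}, each succeeding with probability $1-n^{-\Omega(1)}$, a union bound over all three phases (after inflating the constant $\eta$) keeps the statement valid whp.

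\textbf{Correctness and silence.} It remains to check that the protocol is stable and stabilises silently. On the line, two distinct \texttt{green} (hence correctly positioned) agents determine a third agent's coordinate from the two measured distances — the degenerate equidistant case is excluded by the assumption that any $k+1=2$ positions span $S$ — so the basic $2$-contact epidemic positions every agent correctly. The \texttt{greenish} shortcut only ever assigns an agent the coordinate forced by two of its distinct \texttt{green} contacts (the two excluded interaction types are precisely those for which no such pair is available), so it never introduces an inconsistent label; it merely accelerates the underlying $2$-contact epidemic, whose correctness and eventual completion are unconditional. Silence follows because a \texttt{green} agent's label is never subsequently modified, and once every agent is \texttt{green} no transition of the loop changes any state.

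\textbf{Main obstacle.} The arithmetic is routine; the delicate point is making the phase decomposition airtight: one must verify that the preconditions of Lemmas~\ref{double_i} and~\ref{finish_i} (namely $m\in[2,n^{0.9}]$, respectively at least $n^{0.9}$ \texttt{green} agents) hold at the moments they are applied, that the \texttt{greenish} bookkeeping never corrupts an already-stable label — so that being \texttt{green} is genuinely an absorbing and correct state — and that the transition between the doubling and finishing regimes at $m\approx n^{0.9}$ is handled without double counting. Once these points are pinned down, the bound is immediate from the geometric series above.
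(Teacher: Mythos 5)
Your proposal is correct and follows essentially the same route as the paper: decompose the run into the doubling regime handled by Lemma~\ref{double_i} (summing the geometric series $\sum_{m=2,4,\ldots,n^{0.9}}O((n\log n/m)^{1/3})=O((n\log n)^{1/3})$) and the finishing regime handled by Lemma~\ref{finish_i} ($O(n^{0.1}\log n)$), then take a union bound. The extra initialisation phase and the correctness/silence remarks you add are harmless elaborations of what the paper leaves implicit.
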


\begin{proof}
The execution time of the improved algorithm is the sum of the chunks of time needed to increase the number of \texttt{green} agents from \(m = 2\) to \(m = n\), and it can be calculated using Lemmas
\ref{double_i} and \ref{finish_i}
\[
T = O(n^{0.1}\log n) + \sum_{m=2,4,8,\ldots, n^{0.9}} O((n\log n/m)^{1/3})
= O((n\log n)^{1/3}). 
\]
\end{proof}

\section{Self-stabilising localisation}
\label{s:self}

We begin with a brief description of the solution.
The proposed self-stabilising protocol in $k$-dimensions is based on iterative {\em rounds}, with each round utilising three mechanisms:
{\em leader election} incorporated into {\em leader-based localisation} (the first stage of a round),
and a {\em buffering mechanism} (the second stage of a round) adopted from~\cite{GGS25} (Section~5.2). 

During leader election, each agent draws $\log n$ random bits, and if none of these bits is $0$, the agent proceeds to the actual localisation protocol as a leader. 
Note that after all agents complete drawing random bits, which takes $O(\log n)$ time,
a unique leader is elected with constant probability, see Lemma~\ref{l:leader}.
Regardless of whether a unique leader is elected, the leader-based localisation protocol continues (possibly indefinitely) unless an {\em anomaly} is detected. 
%
Two types of anomalies may arise. 
The first, {\em label inconsistency}, occurs only between two \texttt{green} agents. This kind of anomaly occurs when either the initial configuration has conflicting labels of \texttt{green} agents or multiple leaders were elected during the current round.
When all agents are \texttt{green} this type of anomaly is detected in time $O(k\log n)$, see Lemma~\ref{l:rigid}.
The second anomaly occurs when any non-\texttt{green} agent’s counter reaches its deadline $O(n(\log n/n)^{1/(k+1)})$, i.e., the expected stabilisation time of Algorithm~\ref{alg:Pos_k}), indicating that the localisation process is not completed on time.
Upon detecting any anomaly, a reset signal is initiated and propagated via a simple epidemic in $O(\log n)$ time. This together with the (collection and) buffering mechanism adopted from~\cite{GGS25} (Section 5.2) ensures sufficient time, namely $O(\log n)$, to reset states of all agents, and prepare them for the next round.

Finally, the time of each round is dominated by the expected time of  localisation stabilisation $O(n(\log n/n)^{1/(k+1)}).$
Also, the probability of stabilisation in a round is constant, primarily driven by the constant probability associated with successful leader election.
Thus, after $O(\log n)$ rounds and a total time of $O(n(\log n/n)^{1/(k+1)}\log n)$, the localisation protocol self-stabilises with high probability (whp), see Theorem~\ref{t:self}.

Our result demonstrates that self-stabilisation is not only feasible but also efficient, i.e., comparable to Algorithm~\ref{alg:Pos_k} presented in Section~\ref{s:leader}. 
%
Moreover, although our self-stabilising localisation protocol relies on leader election, it only requires knowledge of $\log n$, whereas self-stabilising leader election requires precise knowledge of $n$, see~\cite{CIW12}. This is not a contradiction because we only need leader election to succeed with constant probability, which together with efficient anomaly detection, discussed in Section~\ref{s:augment}, and lightly modified buffering mechanism from~\cite{GGS25}, discussed in Section~\ref{s:buffer}, allows us to restart leader election and in turn the localisation protocol, when necessary.


Given that the self-stabilising localisation protocol combines multiple sub-protocols, we provide further details below, including the relevant proofs.
Note that, as this version emphasises the feasibility of self-stabilisation, the optimisation of both time and space for the protocol will be addressed in the full version of this paper.

\subsection{Memory utilisation}
We begin by analysing the memory utilisation of the protocol. The number of states required for leader election and buffering is $O(\log n)$, as shown in Lemma~\ref{l:leader} and Lemma 21 of~\cite{GGS25}, respectively. However, agents occupying states used for leader election and buffering do not participate in the actual localisation process. As a result, the overall state space of the self-stabilising protocol is dominated by that of Algorithm~\ref{alg:Pos_k}, extended by a deadline counter of size $O\left(n(\log n / n)^{1/(k+1)}\right)$ used by the \texttt{blue} agents.

\subsection{Leader election\label{s:leaderelection}}

We begin by presenting a suitable leader election protocol that operates in $O(\log n)$ time and uses $O(\log n)$ space. This protocol is executed at the beginning of the first stage of each round and is followed by the main localisation protocol, Algorithm~\ref{alg:Pos_k}, augmented with anomaly detection and deadline counters.


Leader election is done by independently assigning a leader role to each agent with probability $p=\Theta(1/n)$.
In this way, the probability of electing exactly one leader is constant and is maximised when $p=1/n$ and is then approximately $1/e$.
In our protocol, each agent independently tosses a symmetric coin $\log n$ times and becomes the leader if it gets heads on all tosses.
We are left to describe the protocol implementing this process.

States in this protocol are denoted as pairs $(A, i)$, in which $A \in \{N, H, T\}$ and $i \in \{1, \ldots, \log n, L, F\}$. The first component, $A$, represents the type of coin toss state: {\em Neutral} ($N$), {\em Head} ($H$), or {\em Tails} ($T$). The second component, $i$, either serves as a counter indicating progress toward successful leader election, or denotes a status: $L$ for leader and $F$ for follower.
On leaving the buffering mechanism (the second stage of the previous round)
all agents receive state $(N,1)$, and the only meaningful interactions in the leader election protocol are:
\begin{itemize}
\item
The creation of $H:T$ state pairs ensuring that there are consistently the same number of agents with $H$ and $T$ states in the population: 
$(N,*)+(N,*)\rightarrow (H,*)+(T,*).$
\item
A coin toss in which the initiator gets heads:
$(*,i)+(H,*)\rightarrow (*,i+1)+(H,*),$ if $i<\log n$,
and $(*,\log n)+(H,*)\rightarrow (*,L)+(H,*)$, otherwise.
\item
A coin toss in which the initiator gets tails:
$(*,i)+(T,*)\rightarrow (*,F)+(T,*)$, if $i\in\{1,\ldots,\log n\}$.
\end{itemize}

\begin{lemma}\label{l:leader}
 A unique leader is elected in $O(\log n)$ time with a constant probability.
\end{lemma}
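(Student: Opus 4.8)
The plan is to analyse the two independent sources of randomness in the protocol separately: the time for all agents to finish tossing their $\log n$ coins, and the probability that exactly one agent ends up with all heads. First I would argue that the $H{:}T$ pair–creation rule guarantees that, starting from the all-$(N,1)$ configuration, a constant fraction of the population quickly enters the $H$ or $T$ states, so that thereafter every agent meets an $H$ agent (heads) or a $T$ agent (tails) with probability $\Omega(1)$ per interaction. More precisely, I would show that after $O(1)$ parallel time the number of $H$ agents is $\Theta(n)$ whp (using Lemma~\ref{czernow} applied to the number of $(N,*)+(N,*)$ collisions, and noting $H$ and $T$ counts stay equal), and that from then on each agent performs one coin toss (advancing its counter or falling to $F$) at rate $\Omega(1)$.

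The core step is then a coupon-collector / epidemic-style argument: a fixed agent needs $\log n$ successful toss-interactions to reach status $L$ or to be knocked out to $F$. Since each interaction it initiates hits an $H$-or-$T$ agent with constant probability, in $O(\log n)$ parallel time it has initiated $\Omega(\log n)$ such interactions whp; fixing the constant via Lemma~\ref{czernow} (expected number of tosses $\ge c\log n$ with $c$ large) gives that this agent has completed all $\log n$ tosses. Taking a union bound over all $n$ agents — which costs only a polynomial factor absorbed by increasing the constant $\eta$ in Lemma~\ref{czernow} — shows that within $O(\log n)$ parallel time \emph{every} agent has settled into state $(*,L)$ or $(*,F)$, i.e.\ the leader-election phase has terminated.

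For the probability bound, each agent's sequence of $\log n$ tosses is an independent fair coin sequence (the scheduler's choice of an $H$ versus a $T$ partner is symmetric by the invariant $\#H=\#T$), so the probability a given agent becomes leader is $2^{-\log n}=1/n$, and these events are independent across agents. Hence the number of leaders is $\mathrm{Binomial}(n,1/n)$, and $\Pr[\text{exactly one leader}] = n\cdot\tfrac1n\cdot(1-\tfrac1n)^{n-1}\to 1/e$, a positive constant. Combining: within $O(\log n)$ time the phase ends, and with constant probability it ends with a unique leader.

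The main obstacle I anticipate is making the symmetry claim "each toss is a fair coin" fully rigorous, since the $H$ and $T$ populations are not literally equal at every instant and are themselves evolving; the clean fact is that the pair-creation rule preserves $\#H=\#T$ exactly once no $(N,*)$ agents remain, so one must either condition on reaching a no-$N$ configuration fast (which the first step provides whp) or observe that even with $N$ agents present the first toss each agent makes is equally likely to see an $H$ as a $T$. A secondary technical point is ensuring that the $\Theta(n)$ lower bound on the number of $H$ agents survives throughout the $O(\log n)$-time window despite $H$ agents never being consumed (they act only as catalysts), which is immediate once established. I would keep these two observations as short lemmas or inline remarks rather than spelling out every constant.
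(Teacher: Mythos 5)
Your proposal is correct and follows essentially the same route as the paper: a constant fraction of agents becomes $H$/$T$ catalysts within $O(1)$ parallel time, each agent then completes its $\log n$ tosses within $O(\log n)$ time by a Chernoff bound plus a union bound over agents, and the $\mathrm{Binomial}(n,1/n)$ computation gives a unique leader with probability about $1/e$. The fairness issue you flag is in fact a non-issue: the rule $(N,*)+(N,*)\rightarrow (H,*)+(T,*)$ creates $H$ and $T$ in matched pairs and these first components are never altered afterwards, so $\#H=\#T$ holds exactly at every step and every toss is an exactly fair, independent coin.
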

\begin{proof}
Let us assume that at time $0$ all agents have already left the buffer and joined the leader election protocol. For as long as the number of agents in $(N,*)$ states exceeds $n/2$ the probability of forming a $H:T$ pair in a given interaction exceeds $1/4$. Thus, the expected time in which the number of agents in $(N,*)$ states falls below $n/2$ is less than $2$, coinciding with $2n$ interactions. And whp this time is less than 3.
After this time, the probability that an agent with a counter in an interaction makes a coin flip is $1/2n$. Thus, in a time period of $2c\cdot \log n$ it performs at least $c\cdot \log n$ coin tosses on average.
From Chernoff's inequality for a sufficiently large constant $c$ it makes at least $c\cdot \log n$ coin tosses whp. From the union bound in total time $O(\log n)$ all agents will perform $\log n$ coin tosses each, which will establish their leader ($L$) or follower ($F$) status.
\end{proof}

We still need to explain how to adopt this leader election protocol in the self-stabilising leader-based localisation protocol.
Later in Section~\ref{s:buffer}, we show that at some point all agents are located in the buffer, and ready to proceed to the next round. When the first agent leaves the buffer, it assumes the state $(N,1)$, which triggers an epidemic process that informs all other agents in the buffer to adopt the same state $(N,1)$. Note that all agents participating in the leader election protocol contribute to this epidemic, ensuring that the entire population engages in leader election within $O(\log n)$ time.
Upon completing the leader election, agents with confirmed leader or follower status start leader-based localisation protocol adopting \texttt{green} and \texttt{blue} states, respectively.
In addition, they never give up their coin ($H|T|N$) attributes to guarantee fairness of the remaining coin tosses for agents still executing leader election protocol.
Thus, after integration, the overall running time of the leader election protocol remains $O(\log n)$.

\subsection{Augmented leader-based localisation}\label{s:augment}

On the conclusion of leader election process leader-based localisation (Algorithm~\ref{alg:Pos_k}) is executed, however, with a few modifications. 
Namely, the interaction counter with the deadline $O\left(n(\log n / n)^{1/(k+1)}\right)$ is added to every new \texttt{blue} agent. This is to ensure that the first stage does not exceed the expected stabilisation time of Algorithm~\ref{alg:Pos_k}. 
One also needs to manage anomaly detection. Recall, that two types of anomalies may arise. 
The first, {\em label inconsistency}, occurs only between two \texttt{green} agents. 
%
In particular, when all agents become \texttt{green} this type of anomaly is detected in time $O(k\log n)$, see the lemma below.

\begin{lemma}\label{l:rigid}
    When all agents are \texttt{green}, the anomaly based on labels inconsistency is detected in time $O(k\log n)$ whp.
\end{lemma}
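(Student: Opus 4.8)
The plan is to argue that once every agent is \texttt{green}, any pair of agents whose labels are mutually inconsistent will, with high probability, meet within $O(\log n)$ interactions-per-agent, and that a single such meeting certifies the anomaly. First I would make precise what ``label inconsistency'' means in this setting: two \texttt{green} agents $A_u, A_v$ with labels $x(A_u), x(A_v)$ are inconsistent if $\lVert x(A_u) - x(A_v)\rVert \neq \distance{u}{v}$, the true distance measured during their interaction. Since the label space is $k$-dimensional and (by the standing assumption that any $k+1$ positions span the space) the correct labelling is rigid up to isometry, if the configuration is \emph{not} a faithful embedding then there must exist at least one such inconsistent pair — in fact at least one inconsistent pair \emph{incident to every} agent whose label is wrong, because an agent's coordinates are determined (up to the global isometry fixed by the anchor and the first $k$ \texttt{green} agents) by its distances to any $k+1$ correctly-placed agents. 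I would phrase this as: either all labels are consistent with a common isometry of the true configuration, or some designated ``witness set'' $W$ of agents, $|W|\geq 1$, each participates in at least one inconsistent pair, and in fact in at least $k+1$ such pairs if we pick $W$ to contain a maximal consistently-embedded block plus one offending agent. The cleanest version: fix the $k+1$ reference \texttt{green} agents (leader plus the first $k$); if any other \texttt{green} agent $A_v$ has a label disagreeing with its true position, then at least one of the $k+1$ distances between $A_v$ and a reference agent is inconsistent with the stored labels — giving a concrete inconsistent pair.

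Next I would do the probabilistic bound. Condition on the state in which all $n$ agents are \texttt{green} and fix one inconsistent pair $(A_u,A_v)$ guaranteed by the previous paragraph (there is at least one as long as the anomaly is present). The scheduler picks this ordered pair with probability $\tfrac{2}{n(n-1)}$ in each interaction, so over a window of $c\, n\log n$ interactions, i.e.\ parallel time $c\log n$, the expected number of $(A_u,A_v)$-meetings is $\tfrac{2c\log n \cdot n}{n(n-1)} \cdot \tfrac{?}{} \geq \tfrac{2c\log n}{n-1}\cdot$ — this is too small for a single fixed pair. So instead I would not fix one pair: I would use that the anomaly, if present, is witnessed by $\Omega(1)$ \emph{many} inconsistent pairs and, more importantly, that each inconsistent \texttt{green} agent carries its inconsistency into \emph{every} interaction it has with one of its $k+1$ reference agents. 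Better still, observe that an offending agent $A_v$ (one whose stored label is wrong) detects the anomaly as soon as it meets \emph{any one} of the $k+1$ reference agents — that is $k+1$ target agents out of $n$, so the per-interaction probability that $A_v$ hits a detector is $\geq \tfrac{2(k+1)}{n(n-1)}\cdot n = \tfrac{2(k+1)}{n-1}$ per interaction involving $A_v$, and $A_v$ is itself chosen with probability $\Theta(1/n)$ per interaction; over $c\,k\, n\log n$ interactions (parallel time $O(k\log n)$) the expected number of detecting meetings for $A_v$ is $\Theta(k\log n / k)=\Theta(\log n)$, hence by the Chernoff bound of Lemma~\ref{czernow} at least one such meeting occurs whp. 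Apply this to the offending agent; if instead \emph{every} agent's label is self-consistent with the references but the global embedding is still wrong, rigidity (the spanning assumption) forces some reference-to-reference or cross pair to be inconsistent, and the same counting applies to that fixed finite set of $O(k^2)$ reference pairs, each of which is met within parallel time $O(\log n)$ whp.

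Finally I would assemble: in parallel time $O(k\log n)$, whp at least one inconsistent pair interacts; upon such an interaction the two \texttt{green} agents compare $\lVert x(A_u)-x(A_v)\rVert$ with $\distance{u}{v}$, find a mismatch, and raise the reset signal — so the anomaly is \emph{detected} within $O(k\log n)$ parallel time whp, as claimed. The main obstacle I anticipate is the combinatorial/rigidity step, namely proving cleanly that a non-faithful all-\texttt{green} configuration necessarily contains a \emph{detectable} inconsistent pair reachable in $O(\log n)$ time rather than one buried among only $\Theta(1)$ pairs out of $\binom{n}{2}$ (which would cost $\Theta(n)$ time to hit); the resolution is the observation above that a wrongly-labelled agent is inconsistent against \emph{all} $k+1$ of its reference anchors, converting a ``needle in a haystack'' search into one with $\Theta(k/n)$ hitting probability per relevant interaction, which is exactly what makes $O(k\log n)$ time suffice. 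A secondary technical point is handling the degenerate case where the reference block itself is internally inconsistent (e.g.\ two leaders were elected and seeded conflicting anchors); there I would note the reference set has size $O(k)$, all its internal pairs are checked within $O(\log n)$ time whp by the same Chernoff argument, and at least one must be inconsistent since two distinct isometric frames cannot agree on all pairwise distances among $k+1$ spanning points unless they coincide.
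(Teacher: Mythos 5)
There is a genuine gap, and it is quantitative: your hitting-probability calculation is off by a factor of $n$. Once you reduce detection to ``the offending agent $A_v$ meets one of the $k+1$ reference anchors'', you are asking the scheduler to select one of $k+1$ \emph{specific} pairs out of $\binom{n}{2}$. The per-interaction probability of this event is $\Theta(k/n^2)$ (your own intermediate quantities give this: $\Theta(1/n)$ for $A_v$ being chosen times $\Theta(k/n)$ for the partner being an anchor), so over $c\,k\,n\log n$ interactions the expected number of detecting meetings is $\Theta(k^2\log n/n)=o(1)$, not $\Theta(\log n)$; whp detection along these lines needs $\Theta(n\log n/k)$ parallel time, not $O(k\log n)$. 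The same error recurs in your closing claim that each of the $O(k^2)$ reference--reference pairs ``is met within parallel time $O(\log n)$ whp'': a fixed pair is met within parallel time $t$ only with probability $\Theta(t/n)$. A secondary problem is that in the self-stabilising setting in which this lemma is invoked, all agents may already be \texttt{green} with arbitrary labels, so a distinguished, internally consistent reference block (``leader plus first $k$ \texttt{green} agents'') need not exist at all; the anomaly can only be defined as non-existence of a single isometry matching all labels, with no canonical anchors to test against.

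The paper takes a different route that avoids fixed witness pairs altogether: after $O(k\log n)$ parallel time every agent has interacted with $\Theta(k\log n)$ random partners, and the resulting random interaction graph is, whp, globally rigid for points in $k$-dimensional space (citing~\cite{LNP+23}). If no interaction up to that time revealed a mismatch between label distance and measured distance, then the labels satisfy the distance constraints of a globally rigid graph and hence coincide with the true configuration up to isometry, i.e.\ there is no anomaly; contrapositively, an existing inconsistency must be caught on some edge of that graph within $O(k\log n)$ time. If you wanted to salvage your witness-based approach, you would need the stronger geometric fact that an agent whose label deviates from the isometry of a large internally consistent block is inconsistent with \emph{all but at most $k$} members of that block (the consistent partners lie on a bisector hyperplane), giving $\Omega(n)$ inconsistent pairs and hence constant-order detection probability per interaction of the deviant --- plus a separate argument for configurations with no large consistent block. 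That case analysis is exactly the work the paper's appeal to global rigidity of the random contact graph replaces.
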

\begin{proof}
    It is known that a random graph with nodes embedded in $k$-dimensional space and $O(k \log n)$ edges is globally rigid~\cite{LNP+23}. This implies that there exists a unique embedding of the graph, up to isometries such as rotation, translation, and reflection. Consequently, if the labels of any \texttt{green} agents are inconsistent, such inconsistencies will be detected after $O(k \log n)$ interactions.
\end{proof}

The second anomaly occurs when any \texttt{blue} agent’s counter reaches the deadline.
%
Upon detecting either of the anomalies, a reset signal is triggered initiating the buffering mechanism discussed in the next section.

\subsection{Buffering mechanism}\label{s:buffer}

The buffering for reset of the self-stabilising localisation protocol is solved in the same way as the buffering in the ranking protocol with $O(\log n)$ additional states from \cite{GGS25}. In this work, the signal for reset is the detection of an anomaly. It causes the state of the agents that performed the anomaly detection to be set to the first state of the buffer: $X_1$. The buffer is a line consisting of $2d\log n$ states: $X_1,X_2,\ldots, X_{2d\log n}$ for $d$ large enough. 
We assign \texttt{red} colour to the states $X_1,\ldots,X_{d\log n}$ and \texttt{white} (\texttt{green} in~\cite{GGS25}) to the remaining buffer states.

We define the following transitions for these states:
\begin{itemize}
\item {\em Progress} on the line: $X_i+X_j\rightarrow X_{i+1}+X_{i+1}$ for $i\le j$.
\item {\em Reset propagation} by \texttt{red} agents: $X_i+A\rightarrow X_1+X_1$, if $i\le d\log n$ and $A$  not on the line.
\item {\em Buffer departure} by \texttt{white} agents: $X_{2d\log n}+X_{2d\log n}\rightarrow (N,1)+(N,1)$, where $(N,1)$ is the initial state of the leader election phase. 
Also $X_i+A\rightarrow (N,1)+A$, where $i> d\log n$ and $A$ is not on the line.
\end{itemize}

The analysis of the buffering process is summed up in the paper \cite{GGS25} by the following Lemma, proved there

\begin{lemma}[Lemma 21 in paper \cite{GGS25}]\label{l:upper-line}
    There exists $d'>0$ s.t. for any $d>0$ there is $c>0$
    for which after $c\log n$ time since state $X_1$ (reset signal) arrival, all agents are in line states $X_i$ with indices $d\log n< i\le (d+d')\log n$ whp.
\end{lemma}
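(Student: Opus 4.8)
The plan is to reproduce the argument of \cite{GGS25} (Lemma~21), splitting the evolution after the reset signal into three consecutive stages, each taking $O(\log n)$ parallel time whp. The first stage is \textbf{reset flooding}: from the first moment some agent is in state $X_1$, the \emph{reset propagation} rule $X_i+A\to X_1+X_1$ behaves as a one-way epidemic whose ``infected'' set is the population of agents currently on the line -- any off-line agent touched by a \texttt{red} line agent is pulled onto the line at $X_1$, and any high-index line agent touched by a \texttt{red} agent is dragged down to index $2$ via \emph{progress}. By the standard epidemic bound, within $O(\log n)$ time every agent is on the line whp. Since an agent's index rises by at most $1$ per interaction (immediate from the rules) and, by Lemma~\ref{czernow}, every agent has at most $O(\log n)$ interactions during this stage whp, at the end of flooding all indices lie below $d_0\log n$ for an absolute constant $d_0$. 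Choosing the buffer parameter $d>d_0$ guarantees all agents are still \texttt{red}, so no \emph{buffer departure} has fired and no off-line agents remain.

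The second stage is \textbf{packing and upward drift}. Once all agents are on the line, every interaction is a \emph{progress} interaction; the rule $X_i+X_j\to X_{i+1}+X_{i+1}$ ($i\le j$) destroys large indices (the larger of a mismatched pair is pulled to one above the smaller) and lets indices grow only through same-level collisions. I would show, as in \cite{GGS25}, that the configuration quickly settles into a ``travelling wave'': after $O(\log n)$ time the occupied indices form a band of width $O(\log n)$, thin at its leading edge and heavy behind it, and this band drifts upward at a positive constant rate. Hence after a further $O(\log n)$ time the minimum index exceeds $d\log n$ while the maximum index is still below $(d+d')\log n$ for a suitable constant $d'$; the whp statements for the individual ``emptying'' events of the $O(\log n)$ traversed levels, and for their sum, again follow from Lemma~\ref{czernow}.

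The third stage is \textbf{stability of the window}. Once the minimum index exceeds $d\log n$, no agent is \texttt{red}, so \emph{reset propagation} never fires again; since flooding left no off-line agent and \emph{progress} creates none, the only mechanism that could re-introduce an off-line agent -- and hence destabilise the window -- is a \emph{buffer departure}, which requires some index to reach $2d\log n$. But, starting below $d\log n$ at the end of the second stage and drifting up at a bounded rate, the maximum index needs $\Omega(\log n)$ time to reach $2d\log n$. Fixing $c$ between the (stage~1 $+$ stage~2) completion time and this departure time, at time $c\log n$ every agent is on the line with index in $(d\log n,(d+d')\log n]$ whp, which is the claim.

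The main obstacle is the second stage: making the travelling-wave picture quantitative -- proving that the \emph{progress} rule concentrates the index distribution into a band of width $O(\log n)$ with a sparse leading edge, and that this band advances at a positive constant rate, all with high probability. This is exactly \cite{GGS25}'s Lemma~21, whose potential-function/coupling analysis I would invoke essentially verbatim; by contrast, the epidemic estimate of the first stage and the ``stay-put'' argument of the third stage are routine consequences of Lemma~\ref{czernow}.
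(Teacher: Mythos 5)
The paper does not actually prove this lemma: it is imported wholesale from \cite{GGS25} (``Lemma 21 \dots proved there'') and used as a black box, with the only local work being the parameter substitutions made after the statement. Your three-stage sketch is therefore already more than the paper offers, but it cannot be accepted as a proof, because the decisive step is circular: the whole difficulty lies in your second stage (showing that the \emph{progress} rule concentrates the indices into an $O(\log n)$-wide band that drifts up past $d\log n$ without reaching the departure threshold too early), and there you say you would invoke \cite{GGS25}'s Lemma~21 ``essentially verbatim'' --- i.e.\ you invoke the very statement being proved. The flooding estimate of stage~1 and the stay-put argument of stage~3 are indeed routine epidemic/Chernoff facts, but they are the easy periphery; as written, the proposal reduces the lemma to itself, which is fine as an explanation of how the external result is used, but not as a proof.

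Two further inaccuracies in the sketch are worth fixing. First, the \emph{reset propagation} rule $X_i+A\to X_1+X_1$ fires only when $A$ is \emph{not} on the line; an interaction between a \texttt{red} line agent $X_i$ and a high-index line agent $X_j$ is a \emph{progress} interaction and sends the pair to $X_{i+1}+X_{i+1}$, not to index~$2$ (unless $i=1$), so ``dragged down to index~2'' is not what the rules give. Second, your stage-1 conclusion requires choosing the buffer parameter $d>d_0$ for an absolute constant $d_0$, whereas the lemma is quantified ``for any $d>0$ there is $c>0$''; conflating the lemma's threshold $d$ with the buffer's construction parameter in this way proves at best a weaker statement (sufficient for the paper's application, where $d$ is taken large enough, but not the lemma as stated).
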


If we substitute our $d$ for $d$ and $d'$ in this lemma we obtain that at time $O(\log n)$ after the first agents appear in state $X_1$, all agents are \texttt{white}.
On the other hand, when we substitute $2d$ for $d$ in the lemma and $d$ for $d'$, we obtain that at time $O(\log n)$ the agents start to leave the line of the buffer.

We conclude Section~\ref{s:self} with the following theorem.

\begin{theorem}\label{t:self}
The localisation protocol presented in this section self-stabilises silently in time $O(n(\log n/n)^{1/(k+1)}\log n)$ whp.
\end{theorem}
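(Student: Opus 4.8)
The plan is to assemble the theorem from the three sub-protocol analyses already established in this section, treating one \emph{round} as the atomic unit and then bounding the number of rounds and the cost per round. First I would fix the structure of a round: (stage 1) the leader election protocol of Lemma~\ref{l:leader} followed by the augmented Algorithm~\ref{alg:Pos_k} with deadline counters and anomaly detection (Section~\ref{s:augment}); (stage 2) the buffering/reset mechanism of Section~\ref{s:buffer}, governed by Lemma~\ref{l:upper-line}. I would call a round \emph{successful} if a unique leader is elected and no anomaly is ever triggered thereafter; the goal is to show that a successful round stabilises the protocol permanently, that each round is successful with probability bounded below by a positive constant, and that the time per round is $O(n(\log n/n)^{1/(k+1)})$ whp.

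The key steps, in order. (i) \textbf{Per-round time bound.} In stage 1, leader election finishes in $O(\log n)$ time (Lemma~\ref{l:leader}); the augmented localisation then runs, and by construction every \texttt{blue} agent carries a counter with deadline $\Theta(n(\log n/n)^{1/(k+1)})$, so stage 1 terminates — either by all agents turning \texttt{green} or by a deadline firing — within $O(n(\log n/n)^{1/(k+1)})$ time. Anomaly detection adds only the $O(k\log n)$ overhead of Lemma~\ref{l:rigid} (negligible for fixed $k$) plus the $O(\log n)$ epidemic that propagates the reset signal. Stage 2 (buffering) takes $O(\log n)$ time by Lemma~\ref{l:upper-line}, after which all agents re-enter state $(N,1)$ and the next round begins. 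Summing, one round costs $O(n(\log n/n)^{1/(k+1)})$ whp. (ii) \textbf{Constant success probability per round.} Conditioned on entering a round from a clean buffer exit, Lemma~\ref{l:leader} gives a unique leader with probability $\Omega(1)$ (about $1/e$). Conditioned on a unique leader and the standing genericity assumption that any $k+1$ positions span $S$, the augmented Algorithm~\ref{alg:Pos_k} behaves exactly as the non-self-stabilising version, which by Theorem~\ref{t:PosK} stabilises silently within the deadline whp; hence no deadline anomaly fires, and since there is a single consistent embedding no label-inconsistency anomaly fires either. So the round is successful with probability $\Omega(1) - o(1) = \Omega(1)$. (iii) \textbf{A successful round is terminal.} If a unique leader is elected and localisation stabilises with all agents \texttt{green} and mutually consistent labels, then no transition of Algorithm~\ref{alg:Pos_k} applies, no deadline is reached, and by global rigidity (Lemma~\ref{l:rigid}) no inconsistency is ever detected; thus no reset is triggered and the configuration is silent and stable forever. (iv) \textbf{Geometric number of rounds.} By (ii) and (iii), the number of rounds until the first successful one is stochastically dominated by a geometric random variable with constant parameter, so after $O(\log n)$ rounds the protocol has stabilised with probability $1 - n^{-\eta}$. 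Multiplying the per-round time from (i) by $O(\log n)$ rounds, and applying a union bound over rounds for the whp events, yields total stabilisation time $O(n(\log n/n)^{1/(k+1)}\log n)$ whp. Silence follows from (iii) and from the silence of the buffering and leader-election sub-protocols once no reset is pending.

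The main obstacle is step (iii) together with the \emph{self-stabilising} starting point: one must argue that from an \emph{arbitrary} initial configuration the protocol does not get stuck in a bad quasi-stable state that never triggers a reset. The two anomaly types are designed precisely to cover this — a frozen half-localised configuration is caught by a \texttt{blue} agent's deadline counter, and a fully-\texttt{green} but geometrically inconsistent configuration is caught by the rigidity argument of Lemma~\ref{l:rigid} — but I would need to check carefully that these two cases are exhaustive, i.e.\ that any configuration in which the localisation protocol has ceased to make progress yet is not the correct final configuration contains either a \texttt{blue} agent (whose counter must eventually expire) or two \texttt{green} agents with inconsistent labels. A secondary delicate point is the interface between rounds: one must verify that the buffering mechanism of \cite{GGS25}, with the modified colouring into \texttt{red}/\texttt{white} states of length $2d\log n$, indeed guarantees that \emph{every} agent — including those that were mid-localisation when the reset fired — is flushed to state $(N,1)$ before any agent begins the next leader election, so that the rounds are genuinely independent and Lemma~\ref{l:leader}'s constant success probability applies afresh each time; this is exactly what the two substitutions into Lemma~\ref{l:upper-line} (first $d\mapsto d$, then $d\mapsto 2d,\ d'\mapsto d$) are meant to deliver, and I would spell out that the first gives "all agents \texttt{white} within $O(\log n)$" and the second gives "departure begins only after that", so no agent laps the buffer.
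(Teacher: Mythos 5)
Your proposal follows essentially the same route as the paper: the paper's argument for Theorem~\ref{t:self} is precisely the round-based decomposition you describe — leader election succeeding with constant probability (Lemma~\ref{l:leader}), per-round time dominated by the $O(n(\log n/n)^{1/(k+1)})$ stabilisation time of the augmented Algorithm~\ref{alg:Pos_k} with deadline counters and anomaly detection (Lemma~\ref{l:rigid}), an $O(\log n)$-time reset via the buffering mechanism (Lemma~\ref{l:upper-line}), and a geometric bound of $O(\log n)$ rounds giving the stated whp time. The delicate points you flag (exhaustiveness of the two anomaly types and the clean round interface via the two substitutions into Lemma~\ref{l:upper-line}) are exactly the ingredients the paper relies on, so your write-up matches the intended proof.
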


\section{Self-stabilising localisation in vector query model\label{s:vector}}

In this section, we adopt the {\em vector query model}, assuming first that \( S \) is a linear space where agents \( \agentsymbol{0}, \dots, \agentsymbol{n-1} \) are arbitrarily positioned at (unknown to the agents) points \( \agentposition{0}, \dots, \agentposition{n-1} \), respectively. 
We use notation $\agentvector{i}{j}$ to denote the vector connecting $\agentposition{i}$ and $\agentposition{j},$ in this case $\agentvector{i}{i}=\agentposition{j}-\agentposition{i}.$
Additionally, each agent \( \agentsymbol{i} \) possesses a {\em hypothetical coordinate} referred to as its {\em label} \( \agentlabel{i} \).
In this variant of population protocols, during an interaction of \( \agentsymbol{i} \) with \( \agentsymbol{r} \),  the initiator
$\agentsymbol{i}$ learns about vector \( \agentvector{\initiator}{\responder} \) and label $x_r$.
We show that this model is very powerful as it allows design of an optimal $O(\log n)$ time self-stabilising labelling protocol. 
\begin{algorithm}
\caption{Fast positioning in one dimension}
\label{alg:1Dexact}
\tcc{\commontcc}
\Local{$\agentlabel{\initiator}$}
\Input{$\agentlabel{\responder}, \agentvector{\initiator}{\responder}$}
\Begin{
    {$\agentlabel{\initiator} \algoassign 
        \max\{\agentlabel{\initiator}, \, \agentlabel{\responder} - \agentvector{\initiator}{\responder}\}$
    }
}
\end{algorithm}

We start with a trivial fact concerning label updates.

\begin{fact}
\label{fact:never_decreases}
For each agent $A_i$ its label $\agentlabel{i}$ never decreases during the execution of 
Algorithm~{\ref{alg:1Dexact}}.
\end{fact}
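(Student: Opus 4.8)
The plan is a single-step case analysis followed by a trivial induction over the scheduler's interaction sequence, so the whole argument is essentially immediate from inspecting the code. First I would note that in Algorithm~\ref{alg:1Dexact} the label $\agentlabel{i}$ of an agent $\agentsymbol{i}$ is written by the one and only assignment $\agentlabel{\initiator} \algoassign \max\{\agentlabel{\initiator}, \agentlabel{\responder} - \agentvector{\initiator}{\responder}\}$, and this line is executed precisely when $\agentsymbol{i}$ plays the role of \emph{initiator} in the current interaction. Hence, at any fixed step of the execution, exactly one of three things happens to $\agentsymbol{i}$: it is not selected by the scheduler and its state is untouched; it is selected as the \emph{responder}, in which case only the initiator's code runs and again $\agentsymbol{i}$'s state — in particular its label — is left unchanged; or it is the initiator interacting with some responder $\agentsymbol{r}$, in which case its new label equals $\max\{\agentlabel{i}, \agentlabel{r} - \agentvector{i}{r}\}$, evaluated at the values held just before the update, and this is by definition of $\max$ at least the previous value $\agentlabel{i}$.

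In all three cases the label of $\agentsymbol{i}$ after the step is at least its value before the step, i.e.\ no single interaction can decrease $\agentlabel{i}$. A straightforward induction on the number of interactions performed so far then gives that $\agentlabel{i}$ is non-decreasing along the entire execution, which is exactly the claim.

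I do not foresee any genuine obstacle here; the only subtlety worth flagging explicitly is the asymmetry of the probabilistic population-protocol model, namely that in this protocol only the initiator rewrites its own state, so a responder never loses value. This is precisely why guarding the update by a $\max$ suffices to guarantee monotonicity (rather than merely eventual convergence), and it is this monotonicity that will underpin the later correctness and convergence analysis of the protocol.
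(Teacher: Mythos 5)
Your argument is correct and is exactly the by-inspection reasoning the paper relies on: the label is only ever rewritten by the single $\max$-assignment executed by the initiator, so each step leaves it unchanged or increases it, and induction over interactions gives monotonicity. The paper treats this as trivial and states it without proof, so there is nothing further to compare.
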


Let $M = \max\limits_{0\le i \le n-1}\big (\agentlabel{i} - \agentposition{i}).$ 

\begin{fact} 
\label{fact:M_is_const}
The value of $M$ does not change during the execution of 
Algorithm~{\ref{alg:1Dexact}}.
\end{fact}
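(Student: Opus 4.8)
The plan is to recognise that, after the change of variables $y_j := \agentlabel{j} - \agentposition{j}$, Algorithm~\ref{alg:1Dexact} is nothing more than a one-way ``max-epidemic'' on the values $y_0,\dots,y_{n-1}$, and that the maximum of a finite family of numbers is preserved by such an epidemic. So $M = \max_j y_j$ is invariant.

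I would first analyse a single interaction. Suppose the initiator is $\agentsymbol{\initiator}$ and the responder is $\agentsymbol{\responder}$. Only the label $\agentlabel{\initiator}$ is touched; every other label, and all positions $\agentposition{j}$, are left unchanged. Using the vector-query convention $\agentvector{\initiator}{\responder} = \agentposition{\responder} - \agentposition{\initiator}$, the update rule of Algorithm~\ref{alg:1Dexact} sets the new label to
\[
\agentlabel{\initiator}' \;=\; \max\{\,\agentlabel{\initiator},\; \agentlabel{\responder} - (\agentposition{\responder} - \agentposition{\initiator})\,\},
\]
and subtracting $\agentposition{\initiator}$ from both sides gives
\[
\agentlabel{\initiator}' - \agentposition{\initiator} \;=\; \max\{\,\agentlabel{\initiator} - \agentposition{\initiator},\; \agentlabel{\responder} - \agentposition{\responder}\,\}.
\]
Hence in the $y$-variables the interaction performs exactly $y_{\initiator} \leftarrow \max\{y_{\initiator}, y_{\responder}\}$ while leaving every other $y_j$ fixed.

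From this the claim follows at once. No $y_j$ ever decreases (by Fact~\ref{fact:never_decreases} the labels are non-decreasing, and the positions are constant), so $M = \max_j y_j$ cannot decrease across an interaction. Conversely, the only coordinate that changes is $y_{\initiator}$, whose new value $\max\{y_{\initiator}, y_{\responder}\}$ is the maximum of two values already present before the interaction and therefore at most the pre-interaction value of $M$; since all other $y_j$ are unchanged, $\max_j y_j$ cannot increase either. Thus every interaction leaves $M$ unchanged, and so $M$ is constant throughout the execution. There is no real obstacle in this argument; the only point to watch is the sign convention of the vector query, which is precisely what makes the $\agentposition{\initiator}$ terms cancel when passing to $y_{\initiator}$.
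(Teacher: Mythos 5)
Your proof is correct and takes essentially the same route as the paper: the identity $\agentlabel{\initiator}' - \agentposition{\initiator} = \max\{\agentlabel{\initiator} - \agentposition{\initiator},\; \agentlabel{\responder} - \agentposition{\responder}\}$ is exactly the paper's cancellation (stated there only for an effective update), with Fact~\ref{fact:never_decreases} supplying the non-decrease of $M$ in both arguments. Your max-epidemic rephrasing merely makes the two directions (that $M$ can neither increase nor decrease) explicit where the paper is terser.
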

\begin{proof}
Consider an effective update of label $\agentlabel{\initiator}$, where $\agentlabel{\initiator}'$ is the updated value of this label.

$\agentlabel{\initiator}' = \agentlabel{\responder} - \agentvector{\initiator}{\responder} 
= \agentlabel{\responder} - (\agentposition{\responder} - \agentposition{\initiator}) 
$. Thus,  
$\agentlabel{\initiator}' - \agentposition{\initiator}  = 
 \agentlabel{\responder} - \agentposition{\responder}  \leq M.$
\end{proof}

Thus after such effective update
 $\agentsymbol{\initiator}$ adopts new label $\agentlabel{\initiator}=\agentlabel{\responder} - \agentposition{\responder}$.
Let $S_M= \{ A_i :  \agentlabel{\initiator} - \agentposition{\initiator}  = M\}$. 

\begin{fact}
\label{fact:labels_consistent_in_MS}
For any two agents $A_i,A_j \in S_M$ we have 
$\agentlabel{j} - \agentlabel{i} = \agentposition{j} - \agentposition{i}$.
\end{fact}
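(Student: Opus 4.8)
Fact (labels consistent in $S_M$) — I want to show that the agents whose "offset" $\agentlabel{i}-\agentposition{i}$ attains the common maximum $M$ already carry mutually consistent labels.

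The plan is to read off the statement directly from the definition of $S_M$. Take any two agents $A_i, A_j \in S_M$. By definition of $S_M$ we have $\agentlabel{i}-\agentposition{i} = M$ and $\agentlabel{j}-\agentposition{j} = M$. Subtracting the first equation from the second gives $(\agentlabel{j}-\agentposition{j}) - (\agentlabel{i}-\agentposition{i}) = 0$, i.e.\ $\agentlabel{j}-\agentlabel{i} = \agentposition{j}-\agentposition{i}$, which is exactly the claimed identity.

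There is essentially no obstacle here; the fact is a one-line consequence of the definition, and its role is purely preparatory. The work in the surrounding argument lies elsewhere: Fact~\ref{fact:M_is_const} guarantees $M$ is an invariant, and the substantive content still to come is (i) that $S_M$ is non-empty and, more importantly, $\emph{grows}$ — every effective update moves the initiator into $S_M$, since after such an update $\agentlabel{\initiator}-\agentposition{\initiator} = \agentlabel{\responder}-\agentposition{\responder}$, and one then argues the responder must itself have lain in $S_M$ (otherwise the update could push the offset below what a genuine $S_M$ agent would produce — one has to check the $\max$ in Algorithm~\ref{alg:1Dexact} interacts correctly) — and (ii) a one-way-epidemic argument showing $S_M$ absorbs the whole population in $O(\log n)$ time whp. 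Once all agents lie in $S_M$, the present Fact says their labels are globally consistent with their true positions up to the common additive shift $M$, which is precisely a valid localisation. So while this Fact is trivial in isolation, it is the bridge that converts "$S_M$ has swallowed everyone" into "the protocol has stabilised correctly", and I would present it exactly as the two-line subtraction above.
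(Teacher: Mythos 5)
Your proof is correct and matches the paper's: both arguments simply substitute the defining equalities $\agentlabel{i}-\agentposition{i}=M$ and $\agentlabel{j}-\agentposition{j}=M$ and subtract, yielding $\agentlabel{j}-\agentlabel{i}=\agentposition{j}-\agentposition{i}$. The surrounding commentary about the role of the fact is accurate but not needed; the two-line subtraction is exactly the paper's argument.
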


\begin{proof}
Note that 
$\agentvector{i}{j} = \agentposition{j} - \agentposition{i} = 
\agentlabel{j} - M - (\agentlabel{i} - M) 
= \agentlabel{j} - \agentlabel{i}$.
\end{proof}

In consequence, the labels of any two agents in $S_M$ are consistent (with respect to their real positions).

\begin{fact}
\label{fact:epidemic}
After $\agentsymbol{\initiator} \notin S_M$ 
interaction with $\agentsymbol{\responder}\in S_M$, $\agentsymbol{\initiator}$ becomes an element of $S_M$.
\end{fact}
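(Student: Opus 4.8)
The claim is that if $\agentsymbol{\initiator} \notin S_M$ interacts (as initiator) with $\agentsymbol{\responder} \in S_M$, then after the interaction $\agentsymbol{\initiator} \in S_M$. The plan is to trace through the single update rule of Algorithm~\ref{alg:1Dexact} and show that the update is \emph{effective} (i.e.\ the $\max$ picks the second argument) and yields exactly the value placing $\agentsymbol{\initiator}$ in $S_M$.

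First I would compute the candidate new label: $\agentlabel{\responder} - \agentvector{\initiator}{\responder} = \agentlabel{\responder} - (\agentposition{\responder} - \agentposition{\initiator})$, so this candidate minus $\agentposition{\initiator}$ equals $\agentlabel{\responder} - \agentposition{\responder} = M$, using $\agentsymbol{\responder} \in S_M$. Next I would argue the update is effective: by Fact~\ref{fact:M_is_const} we have $\agentlabel{\initiator} - \agentposition{\initiator} \le M$ at all times, and since $\agentsymbol{\initiator} \notin S_M$ the inequality is strict, i.e.\ $\agentlabel{\initiator} - \agentposition{\initiator} < M = (\agentlabel{\responder} - \agentvector{\initiator}{\responder}) - \agentposition{\initiator}$; adding $\agentposition{\initiator}$ to both sides gives $\agentlabel{\initiator} < \agentlabel{\responder} - \agentvector{\initiator}{\responder}$, so the $\max$ in Algorithm~\ref{alg:1Dexact} selects $\agentlabel{\responder} - \agentvector{\initiator}{\responder}$. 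Hence the new label $\agentlabel{\initiator}'$ satisfies $\agentlabel{\initiator}' - \agentposition{\initiator} = M$, which is exactly the membership condition for $S_M$. Finally I would note that no other agent's label changes in this interaction and $M$ is unchanged (Fact~\ref{fact:M_is_const}), so $S_M$ genuinely grows by (at least) the element $\agentsymbol{\initiator}$.

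There is essentially no obstacle here: the only thing to be careful about is the strictness of the inequality $\agentlabel{\initiator} - \agentposition{\initiator} < M$, which is exactly the meaning of $\agentsymbol{\initiator} \notin S_M$ combined with the global bound from Fact~\ref{fact:M_is_const}, so the $\max$ is resolved in favour of the incoming value rather than being a no-op. Everything else is a one-line substitution already essentially carried out in the proof of Fact~\ref{fact:M_is_const}.
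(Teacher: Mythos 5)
Your proposal is correct and follows essentially the same route as the paper: both show that $\agentsymbol{\initiator}\notin S_M$ forces $\agentlabel{\initiator}-\agentposition{\initiator}<M$, so the $\max$ in Algorithm~\ref{alg:1Dexact} selects $\agentlabel{\responder}-\agentvector{\initiator}{\responder}$, whose offset from $\agentposition{\initiator}$ equals $\agentlabel{\responder}-\agentposition{\responder}=M$. If anything, your version is slightly tidier, since the paper's final inequality chain writes a strict ``$<$'' where the two quantities are in fact equal.
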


\begin{proof}
As $\agentsymbol{\initiator} \notin S_M,$ we get 
$\agentlabel{\initiator} - \agentposition{\initiator} < M$.
However, as $\responder \in S_M$ we also get  
$\agentlabel{\initiator} - \agentposition{\initiator} <
 \agentlabel{\responder} - \agentposition{\responder},$ and 
in turn
$\agentlabel{\initiator} < 
    \agentlabel{\responder} - \agentposition{\responder} + \agentposition{\initiator} < 
    \agentlabel{\responder} - \agentvector{\initiator}{\responder}$. 
Thus $\agentlabel{\initiator}$
becomes
$\agentlabel{\responder} - \agentvector{\initiator}{\responder}$, which is equal to $M$ as $\responder \in S_M$.
\end{proof}


\begin{theorem}
The localisation Algorithm~\ref{alg:1Dexact} self-stabilises silently in the optimal time $O(\log n)$ whp.
\end{theorem}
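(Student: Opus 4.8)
The strategy is to combine the invariants established in Facts~\ref{fact:never_decreases}--\ref{fact:epidemic} with a standard one-way-epidemic time bound. By Fact~\ref{fact:M_is_const} the quantity $M$ is a constant of the execution, and by Fact~\ref{fact:labels_consistent_in_MS} the set $S_M$ is precisely the set of agents whose labels are already mutually consistent with the real positions; once $S_M$ contains all $n$ agents, no label can change any further (an effective update would require $\agentlabel{\responder}-\agentposition{\responder}>\agentlabel{\initiator}-\agentposition{\initiator}=M$, contradicting the definition of $M$), so the protocol is silent and the output is correct. Hence it suffices to bound the time until $S_M=\{A_0,\dots,A_{n-1}\}$.

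First I would check that $S_M$ is nonempty from the outset: the maximum defining $M$ is attained, so at least one agent lies in $S_M$ in the initial (arbitrary) configuration. Next, Fact~\ref{fact:never_decreases} together with Fact~\ref{fact:M_is_const} shows membership in $S_M$ is \emph{monotone} --- an agent in $S_M$ can never leave it, since its label cannot decrease and $\agentlabel{i}-\agentposition{i}$ can never exceed $M$. Finally, Fact~\ref{fact:epidemic} says that whenever an agent outside $S_M$ acts as initiator against a responder inside $S_M$, it joins $S_M$. Therefore the growth of $S_M$ stochastically dominates a one-way epidemic: at every step, if $|S_M|=m<n$, the scheduler picks an ordered pair that causes growth with probability at least $\tfrac{m(n-m)}{n(n-1)}$ (initiator outside, responder inside). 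This is exactly the classical epidemic process, which reaches the whole population in $O(\log n)$ parallel time whp (by a Chernoff/union-bound argument of the kind already used via Lemma~\ref{czernow}, or by citing the standard population-protocol epidemic bound).

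For the lower bound matching optimality: from an initial configuration in which $S_M$ is a single agent, the expected number of interactions needed before some fixed other agent has ever been touched by the epidemic is $\Omega(n)$, i.e.\ $\Omega(1)$ parallel time, and spreading to all $n$ agents requires $\Omega(\log n)$ parallel time by the usual coupon-collector-flavoured argument for epidemics; no labelling protocol in this interaction model can do better, since an agent that has never interacted cannot have a corrected label. Combining the two bounds gives stabilisation time $\Theta(\log n)$ whp.

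The main obstacle is purely bookkeeping: one must be careful that the ``effective update'' in Fact~\ref{fact:epidemic} really fires for \emph{every} interaction with initiator outside and responder inside $S_M$ (it does, because the strict inequality $\agentlabel{\initiator}<\agentlabel{\responder}-\agentvector{\initiator}{\responder}$ holds there), so that the epidemic lower-bounds the process step-for-step rather than only in expectation. Once that domination is clean, the $O(\log n)$ upper bound and the $\Omega(\log n)$ lower bound are both immediate from known epidemic estimates, and silence follows from the invariants as noted above.
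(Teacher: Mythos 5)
Your proposal is correct and follows essentially the same route as the paper: Facts~\ref{fact:never_decreases}--\ref{fact:epidemic} make membership in $S_M$ monotone and growing as a one-way epidemic, which spreads to all agents in $O(\log n)$ time whp, after which no label can change (silence), with optimality following from the standard epidemic/communication lower bound. Your write-up merely spells out the bookkeeping (nonemptiness and monotonicity of $S_M$, the per-step growth probability) that the paper's brief proof leaves implicit.
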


\begin{proof}
The membership of agents in $S_M$ is spread via one-way epidemic in time $O(\log n)$ whp.
This protocol is silent as after all agents are included in $S_M,$ further label updates are not possible. Ultimately, this protocol is self-stabilizing, requiring no initial assumptions about agents' labels, and achieves time optimality of \( O(\log n) \), corresponding to the communication bound in population protocols.
\end{proof}

Finally, we observe that Algorithm~\ref{alg:1Dexact} can be extended to higher (fixed) dimensions. One can apply multiple instances of one dimensional protocol on coordinates in each dimension, see Algorithm~\ref{alg:d-exact}
\begin{algorithm}
\caption{Fast positioning in fixed $k$ dimensions}
\label{alg:d-exact}
\tcc{\commontcc}

\Local{$\agentlabel{\initiator}=[x_i[0],\dots,x_i[k-1]]$}
\Input{$\agentlabel{\responder}, \, \agentvector{\initiator}{\responder}$}
\Begin{
    \ForEach{$j \in [0, k-1]$}{
        {$\agentlabel{\initiator}[j] \algoassign 
        \max\{\agentlabel{\initiator}[j], \, \agentlabel{\responder}[j] - \agentvector{\initiator}{\responder}[j]\}$}
    }
}
\end{algorithm}

\begin{theorem}\label{t:super}
The localisation Algorithm~\ref{alg:d-exact} 
self-stabilises silently in a fixed $k$ dimensional space in the optimal time $O(\log n)$ whp.
\end{theorem}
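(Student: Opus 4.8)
\textbf{Proof plan for Theorem~\ref{t:super}.}
The plan is to reduce the $k$-dimensional claim to $k$ independent copies of the one-dimensional protocol already analysed. First I would observe that Algorithm~\ref{alg:d-exact} updates each coordinate $j \in [0,k-1]$ by exactly the rule of Algorithm~\ref{alg:1Dexact}, applied to the scalar data $\agentlabel{\initiator}[j]$, $\agentlabel{\responder}[j]$ and the $j$-th component $\agentvector{\initiator}{\responder}[j]$ of the queried vector. Crucially, the update in coordinate $j$ depends only on coordinate-$j$ quantities, so the $k$ coordinate processes evolve independently of one another while sharing the same schedule of interactions. Hence the analysis of Facts~\ref{fact:never_decreases}--\ref{fact:epidemic} applies verbatim to each coordinate separately: defining $M_j = \max_i (\agentlabel{i}[j] - \agentposition{i}[j])$ and $S_{M_j} = \{A_i : \agentlabel{i}[j] - \agentposition{i}[j] = M_j\}$, each $M_j$ is invariant, labels inside $S_{M_j}$ are mutually consistent in coordinate $j$, and membership in $S_{M_j}$ spreads by a one-way epidemic.

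Next I would bound the time. For each fixed $j$, the set $S_{M_j}$ is non-empty from the start (it contains any agent attaining the maximum) and grows monotonically by the epidemic rule of Fact~\ref{fact:epidemic}; by the standard one-way epidemic bound this coordinate stabilises in $O(\log n)$ time whp, i.e.\ with probability $1 - n^{-\eta}$. Taking a union bound over the $k = O(1)$ coordinates, all coordinate processes stabilise simultaneously within $O(\log n)$ time whp (the constant absorbs the factor $k$). Once every $S_{M_j}$ equals the whole population, Fact~\ref{fact:labels_consistent_in_MS} applied coordinatewise gives $\agentlabel{i} - \agentlabel{i'} = \agentposition{i} - \agentposition{i'}$ for all $i,i'$, so the labels form a globally consistent coordinate system; and no further update is possible since each coordinate already attains its maximum, establishing silence. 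Self-stabilisation is immediate: nothing in the argument used any assumption on the initial labels, and $O(\log n)$ matches the one-way epidemic lower bound, giving optimality.

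The only mild subtlety — hardly an obstacle — is that the $k$ epidemics run on the \emph{same} interaction sequence rather than on independent ones, so their stabilisation times are not independent; this is harmless because we only need an upper bound and a union bound over a constant number of events suffices regardless of dependence. I would state this explicitly to pre-empt the concern. Everything else is a direct lift of the one-dimensional argument, so the proof is short.
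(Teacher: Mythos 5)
Your proposal is correct and follows essentially the same route as the paper, which proves Theorem~\ref{t:super} by viewing Algorithm~\ref{alg:d-exact} as $k$ coordinatewise instances of Algorithm~\ref{alg:1Dexact} and applying the union bound over the $k=O(1)$ coordinates; your write-up merely spells out the coordinatewise invariants and the (harmless) dependence of the epidemics on a shared interaction schedule, which the paper leaves implicit.
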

\begin{proof}
By directly applying the union bound.  
\end{proof}
\section{Concluding remarks}

In this paper, we introduce a novel variant of spatial population protocols and explore its applicability to the distributed localization problem.
Any meaningful advances in this problem could pave the way for developing faster and more robust lightweight communication protocols suitable for real-world applications. It could also provide insights into the limitations of what can be achieved in such systems.

Several challenges remain unresolved in this work. Firstly, we do not account for inaccuracies in distance measurements. As no measuring device is perfect, future studies should model measurement errors, which can significantly affect system stability and performance. Our preliminary studies on the fast positioning protocol from Section~\ref{s:vector} reveal a phenomenon called label \emph{drifting} in the presence of errors, resembling phase clock behaviour. This drifting can be controlled to achieve near-complete label stabilisation and may support mobility coordination in large robotic agent populations.

Second, we leave the issue of limited communication range unaddressed. In real-world scenarios, agents often cannot communicate with all other agents in the system, which adds further complexity. As mentioned in the introduction, it is well known that limited communication range and arbitrary network topologies can lead to intractable localization problems in the worst case. Therefore, a promising direction for future research would be to focus on specific classes of network topologies for which lightweight localization protocols are more likely to be effective.

A third challenge is the development of efficient localisation algorithms for mobile agents. In this case, assumptions about the relative speeds of communication and movement would likely be necessary to ensure that data from previous positions can still be utilised effectively. 

Finally, of independent interest are further studies on the computational power of spatial population protocols, both in comparison to existing variants of population protocols and in relation to various geometric problems, types of queries, distance-based biased communication, and other related topics.





\bibliography{ar-references}

\end{document}